\newtheorem{proposition}{Proposition}
\definecolor{mygray}{RGB}{240,240,240}
\def\BibTeX{{\rm B\kern-.05em{\sc i\kern-.025em b}\kern-.08em
		T\kern-.1667em\lower.7ex\hbox{E}\kern-.125emX}}
\begin{document}


\title{UAV-Enabled Fluid Antenna Systems for Multi-Target Wireless Sensing over LAWCNs}

\author{
        Xuhui Zhang,~\IEEEmembership{Member,~IEEE},
        Wenchao Liu,~\IEEEmembership{Graduate Student Member,~IEEE},\\
        Chunjie Wang,~\IEEEmembership{Graduate Student Member,~IEEE},
        Jinke Ren,~\IEEEmembership{Member,~IEEE},\\
        Huijun Xing,~\IEEEmembership{Graduate Student Member,~IEEE},
        Shuqiang Wang,~\IEEEmembership{Senior Member,~IEEE},\\ and
        Yanyan Shen,~\IEEEmembership{Member,~IEEE}

\thanks{
Copyright (c) 2025 IEEE. Personal use of this material is permitted.
However, permission to use this material for any other purposes must be
obtained from the IEEE by sending a request to pubs-permissions@ieee.org.
}


\thanks{
Xuhui Zhang is with Shenzhen Institutes of Advanced Technology, Chinese Academy of Sciences, Guangdong 518055, China, also with the Shenzhen Future Network of Intelligence Institute, the School of Science and Engineering, and the Guangdong Provincial Key Laboratory of Future Networks of Intelligence, The Chinese University of Hong Kong, Shenzhen, Guangdong 518172, China (e-mail: xu.hui.zhang@foxmail.com).
}

\thanks{
Wenchao Liu is with the School of Automation and Intelligent Manufacturing, Southern University of Science and Technology, Guangdong 518055, China (e-mail: wc.liu@foxmail.com).
}

\thanks{
Chunjie Wang is with Shenzhen Institutes of Advanced Technology, Chinese Academy of Sciences, Guangdong 518055, China, and also with the University of Chinese Academy of Sciences, Beijing 100049, China (e-mail: cj.wang@siat.ac.cn).
}

\thanks{
Jinke Ren is with the Shenzhen Future Network of Intelligence Institute, the School of Science and Engineering, and the Guangdong Provincial Key Laboratory of Future Networks of Intelligence, The Chinese University of Hong Kong, Shenzhen, Guangdong 518172, China (e-mail: jinkeren@cuhk.edu.cn).
}

\thanks{
Huijun Xing is with the Department of Electrical and Electronic Engineering, Imperial College London, London SW7 2AZ, The United Kingdom (e-mail: huijunxing@link.cuhk.edu.cn).
}

\thanks{
Shuqiang Wang, and Yanyan Shen are with Shenzhen Institutes of Advanced Technology, Chinese Academy of Sciences, Guangdong 518055, China (e-mail: sq.wang@siat.ac.cn; yy.shen@siat.ac.cn).
}

}

\maketitle

\begin{abstract}
Fluid antenna system (FAS) is emerging as a key technology for enhancing spatial flexibility and sensing accuracy in future wireless systems. This paper investigates an uncrewed aerial vehicle (UAV)-enabled FAS for multi-target wireless sensing in low-altitude wireless consumer networks (LAWCNs) for achieving the low-altitude economy (LAE) missions. We formulate an optimization problem aimed at minimizing the average Cramér–Rao bound (CRB) for multiple target estimations. To tackle this non-convex problem, an efficient alternating optimization (AO) algorithm is proposed, which jointly optimizes the UAV trajectory, the antenna position of the transmit fluid antennas (FAs) and the receive FAs, and the transmit beamforming at the UAV. Simulation results demonstrate significant performance improvements in estimation accuracy and sensing reliability compared to conventional schemes, e.g., the fixed position antenna scheme. The proposed system achieves enhanced sensing performance through adaptive trajectory design and beamforming, alongside effective interference suppression via the flexible FAS antenna repositioning, underscoring its practical potential for precision sensing in the UAV-enabled LAWCNs.
\end{abstract}

\begin{IEEEkeywords}
Uncrewed aerial vehicles, fluid antenna system, low-altitude economy, transmit beamforming, trajectory design.
\end{IEEEkeywords}

\section{Introduction}
\IEEEPARstart {I}{n} recent years, with the vigorous development of the low-altitude economy (LAE), airspace resources over the low-altitude wireless consumer networks (LAWCNs) have gradually expanded from traditional civil aviation and military use to multiple civil scenarios such as urban transportation, logistics and distribution, environmental monitoring, and emergency rescue \cite{wu2025low, 10693833, 10879807}. As the core carrier of the LAWCNs, uncrewed aerial vehicle (UAVs), also known as drones, unmanned aerial vehicles, or autonomous
aerial vehicles (AAVs), are promoting the rapid evolution of emerging industries such as intelligent transportation, smart cities, and intelligent manufacturing by virtue of their advantages of flexibility, low cost, and efficient execution \cite{10632079, chen2025ISCC, 10980172}. However, the widespread application of the LAE missions has also brought many challenges. For example, in the LAWCNs, the targets are dense and highly dynamic, and are susceptible to building occlusion and multi-path propagation, making stable and reliable target perception and environmental perception more complex \cite{10955337, liu2025movable}. At the same time, the diversification and complication of the LAE task requirements, such as multi-target sensing, trajectory prediction, and real-time response, have put forward higher requirements for the accuracy, robustness, and coverage of perception systems.

To address these challenges, wireless sensing technology has gradually become an important technology for achieving the LAE missions over the LAWCNs \cite{10430083}. Compared with traditional sensing methods that rely on visions, wireless sensing has advantages such as all-weather operation, anti-interference capability, and long-distance coverage, enabling simultaneous detection and positioning of multiple targets in complex environments \cite{10571011}. Especially when combined with the high mobility of the UAVs, wireless sensing technology can not only dynamically adjust observation positions but also improve sensing accuracy through beamforming and array optimization, thereby providing more efficient solutions for traffic management, public safety, and resource scheduling in the LAWCNs.
However, owing to the high mobility of the UAVs, the channels between the UAVs and the targets can vary rapidly. As a result, the beamforming gain achieved with a fixed-position antenna array may be severely degraded by such channel variations, thereby significantly limiting the performance of wireless sensing.

Thankfully, the emergence of fluid antenna system (FAS) has brought new opportunities for overcoming the channel degradation in the UAV-enabled wireless sensing \cite{9264694}.
{
Unlike traditional fixed position antenna (FPA) arrays, the FAS can dynamically adjust the antenna position of the array elements according to the channel environment, the location of the transmitter and the receiver, and the LAE mission requirements, thereby significantly improving spatial resolution and sensing performance with limited hardware resources through the flexible beamforming over the LAWCNs \cite{10146274, 10146286}.
}
This feature is particularly suitable for the UAV-enabled multi-target sensing tasks: by combining the UAV trajectory design, the beamforming optimization, and the antenna array adaptive configuration, the FAS-aided UAV-enabled systems can effectively overcome the problems of signal fading and multi-path interference in low-altitude environments, achieving high-precision sensing performance of multiple targets in the LAWCNs. 

Although some studies have explored trajectory optimization and beamforming design for the UAV-enabled wireless sensing systems, most existing works are based on the traditional FPA, making it difficult to fully exploit the potential of array reconfigurations. In addition, existing works often focus on single-target or static scenarios, with insufficient consideration given to multi-target sensing problems in dynamic environments. How to simultaneously optimize the trajectory, the beamforming, and the array structure to achieve more efficient multi-target sensing remains a pressing challenge to be addressed.
{Motivated by above, this paper studies a novel UAV-enabled FAS for the multi-target sensing missions over the LAWCNs.}
The main contributions of this paper are summarized as follows:
\begin{itemize}
    \item We investigate a novel UAV-enabled FAS for multi-target wireless sensing, where the UAV is flying over the LAWCN to esitimate the multiple ground targets for specific LAE missions.
    A comprehensive optimization problem is formulated to minimize the average Cramér–Rao Bound (CRB) of the targets.
    \item To address the non-convexity of the formulated problem, we decompose it into three subproblems and propose an efficient alternating optimization (AO)-based algorithm that iteratively optimizes the UAV trajectory design, the transmit and receive fluid antennas (FAs), and the UAV transmit beamforming.
    \item Extensive simulations are carried out to evaluate the proposed scheme with several existing benchmarks. The results confirm significant improvements in sensing accuracy and operational reliability, demonstrating the effectiveness of our joint optimization scheme.
\end{itemize}

\textit{Organizations:}
The rest of this paper is organized as follows. Section II reviews the related works. Section III presents the system model of the UAV-enabled FAS for the multi-target wireless sensing and formulates the CRB optimization problem. In Section IV, the problem is firstly transformed, then solved by an AO-based algorithm. Then, the computational complexity and convergence behavior of the proposed AO-based algorithm are analyzed. The numerical results are provided in Section V, and Section VI concludes the paper and introduces the future directions.

\section{Related Works}
{
This section gives a systematic review of the existing works related to the consumer-oriented UAV systems and the FAS.
The first part reviews the recent advancements in the UAV-enabled LAWCNs for consumer demands, underlining the benefits and application potential for integrating UAV into consumer LAE applications.
The second part reviews the existing works in the UAV-enabled wireless sensing, highlighting their advantages and limitations.
The third part discusses the emerging integration of the FAS with UAV, which introduces new beamforming capabilities through dynamic antenna configuration.}
To clearly contextualize our contribution, Table \ref{trw} summarizes key differences between our work and selected prior studies, particularly in terms of the system model design, the optimization variables, and the performance metrics.

\subsection{{UAV-Enabled LAWCNs}}
{Recent works have increasingly focused on UAV-enabled LAWCNs tailored to meet the specific demands of consumer-oriented LAE applications.
For instance, to process computation-intensive and time-sensitive topological tasks, a UAV-enabled mobile edge computing (MEC) system for edge consumer users was investigated in \cite{11124264}.
For adversarial anomaly detection, a consumer-grade UAV-enabled MEC system was studied in \cite{11192518}, where the detection accuracy was improved in the decentralized federated learning for consumer users.
In a consumer-to-consumer communication system, a collaborative task processing system enabled by multiple UAVs for balancing energy efficiency and computational timeliness was studied in \cite{11222717}.
Moreover, a UAV-enabled consumer ad hoc network was investigated in \cite{11222750}, where dynamic frequency switching and rate adaptation were optimized to mitigate interference leveraging machine learning technique.
For dynamic weather conditions and customer demands, a real-time path design algorithm was proposed for a collaborative truck-UAV delivery system \cite{11205471}.
The integration of UAVs into consumer-oriented LAWCNs holds tremendous promise for shaping the future of the LAE applications through flexible, scalable, and user-centric connectivity.
}

\subsection{UAV-Enabled Wireless Sensing}
{
UAVs offer unique advantages for wireless sensing, such as flexible deployment and improved line-of-sight channel conditions. Several previous studies have begun to explore the UAV-enabled wireless sensing in some LAE applications \cite{10756618, 10680056, 10376413}.
}
For instance, in \cite{10756618}, a full-duplex UAV-enabled wireless sensing network was investigated, aiming to maximize learning performance while ensuring sensing quality.
In \cite{10680056}, a UAV-enabled multi-user tracking and beamforming design over the space-air-ground integrated network was studied, where two methods for estimating the location information of the ground mobile users at the UAV were proposed by the wireless sensing.
Moreover, \cite{10376413} introduced a robust radio frequency-based UAV-enabled sensing approach, in which the received UAV-emitted RF signal was processed to obtain the minimum variance distortionless response spectrum, from which a feature vector is extracted for reliable UAV detection and identification.
{As such, the UAV-enabled consumer-oriented wireless sensing is promising to empower LAE applications by delivering real-time, personalized environmental awareness, e.g., from smart city navigation to immersive location-based services.}
{
However, these previous works are inherently limited by the spatial rigidity of the FPA arrays, which restricts their adaptability to dynamic channel conditions and mobile sensing scenarios.
}

\subsection{UAV-Enabled FAS}
{
In contrast to the FPA arrays, the FAS introduces a new degree of freedom (DoF) by enabling dynamic reconfiguration of antenna positions, thereby enhancing adaptability to varying channel conditions and improving spatial gain. Recently, several preliminary studies have explored the integration of FAS into UAV-enabled LAWCNs, demonstrating potential improvements in spectral efficiency under mobility \cite{11048899, 11078433, 10654366, zhang2025dc, 11148216}.
For instance, the relay UAV was assisted in an FA-enabled communication system to help improve the outage probability \cite{11048899}.
In \cite{11078433}, a dynamic port-reconfigurable channel model was proposed for the air-to-ground communications in the UAV-enabled FAS.
Moreover, the downlink and uplink scenarios in the UAV-enabled FAS were studied \cite{10654366, zhang2025dc}, where the user communication rates were optimized with the help of the FAs.
Besides, the channel estimation for the UAV-enabled FAS was optimized in \cite{11148216}, by maximizing the spatial diversity gain introduced by the FAs.
{Empowered by flexible beamforming, UAV-enabled FAS for LAWCNs can deliver more efficient LAE services to both aerial and ground-based consumers.}
However, existing studies failed to fully exploit the joint design of the UAV mobility and the DoF offered by the FAS to support multi-target sensing with the accurate metrics.
}

\begin{table}[!htbp] \footnotesize
	\centering
	\caption{{Comparison of Selected Existing Works}}
	\label{trw}
	\begin{tabular}
    {
    m{1.2cm}<{\centering}
    m{0.6cm}<{\centering}
    m{0.95cm}<{\centering}
    m{0.95cm}<{\centering}
    m{0.95cm}<{\centering}
    m{0.6cm}<{\centering}
    m{0.6cm}<{\centering}
    }
  \toprule	
  Works & UAV & Trajectory Design & Wireless Sensing
 & Multiple Targets & CRB & FAS \\	
  \midrule  %
  \cite{10756618} & $\surd$ & $\surd$ & $\surd$ & $\times$ & $\times$ & $\times$ \\
  \cite{10680056} & $\surd$ & $\times$ & $\surd$ & $\surd$ & $\times$ & $\times$ \\
    \cite{10376413} & $\surd$ & $\times$ & $\surd$ & $\surd$ & $\times$ & $\times$ \\ 
  \cite{11048899,11078433} & $\surd$ & $\times$ & $\times$ & $\times$ & $\times$ & $\surd$ \\
    \cite{10654366, zhang2025dc} & $\surd$ & $\surd$ & $\times$ & $\times$ & $\times$ & $\surd$ \\
  \cite{11148216} & $\surd$ & $\times$ & $\surd$ & $\surd$ & $\times$ & $\surd$ \\
  \cellcolor{mygray}{\textbf{Ours}} & \cellcolor{mygray}{$\boldsymbol{\surd}$} & \cellcolor{mygray}{$\boldsymbol{\surd}$} & \cellcolor{mygray}{$\boldsymbol{\surd}$} & \cellcolor{mygray}{$\boldsymbol{\surd}$} & \cellcolor{mygray}{$\boldsymbol{\surd}$} & \cellcolor{mygray}{$\boldsymbol{\surd}$}  \\
		\bottomrule
	\end{tabular}
\end{table}

\begin{figure}[ht]
	\centering
	\includegraphics[width=0.85\linewidth]{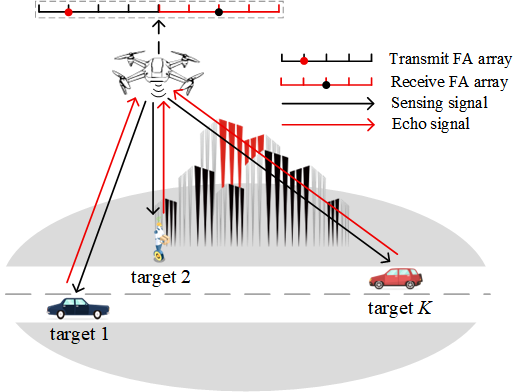}
	\captionsetup{justification=centering}
	\caption{The UAV-enable FAS for multi-target wireless sensing over the LAWCNs.}
	\label{fig:sysmodel}
\end{figure}

\section{System Model and Problem Formulation}
We consider a UAV-enabled wireless sensing system as illustrated in Fig.~\ref{fig:sysmodel}, where the UAV is equipped with the FA array to sense the $K$ targets. The set of the targets is denoted as $\mathcal{K} = \{1,\ldots,K\}$.
The total mission interval is $T$ seconds, which is divided into $N$ equal time slots, with each duration $\tau = T/N$. The set of the time slots is denoted by $\mathcal{N} = \{1,\ldots,N\}$.
We consider a three-dimensional (3D) Cartesian coordinate system. The position of the UAV at time slot $n$ is given by $(\mathbf{q}[n], H)$, where $\mathbf{q}[n] = (x[n], y[n])$ is the horizontal location of the UAV and $H$ denotes the fixed altitude of the UAV.
The position of the $k$-th target is $\mathbf{q}_k = (x_k, y_k)$.

The UAV first transmits the downlink sensing signal towards all targets, then utilizes the reflected echoes to estimate the targets. Hence, the FA array consists of $M_t$ transmit FAs, and $M_r$ receive FAs, where the position of the $M_t$ transmit FAs and $M_r$ receive FAs are denoted by $\mathbf{x}[n] = \{ x_{1}[n], x_{2}[n], \ldots, x_{M_{t}}[n] \}^{\mathsf{T}}$ and $\mathbf{y}[n] = \{ y_{1}[n], y_{2}[n], \ldots, y_{M_{r}}[n]  \}^{\mathsf{T}}$, respectively. To ensure the antenna position feasibility of the FAs, we ensure $x_{1}[n] < x_{2}[n] < \dots < x_{M_{t}}[n], \forall n \in \mathcal{N}$ and $y_{1}[n] < y_{2}[n] < \ldots < y_{M_{r}}[n], \forall n \in \mathcal{N}$.
{To enhance readability, the main notations and their descriptions are given in Table \ref{tnoa}.}

\begin{table}[!htbp] \footnotesize
	\centering
	\caption{{Notations and Descriptions}}
	\label{tnoa}
	\begin{tabular}{c>{\centering\arraybackslash}p{6.825cm}}	
  \toprule	
  Notations& Description\\	
  \midrule  %
    $k,K,\mathcal{K}$ & The index, number, and set of the targets\\ 
  $n,N,\mathcal{N}$ & The index, number, and set of time slots\\
  $\mathbf{q}[n]$ &  The 2D position of the UAV at time slot $n$\\
  $\mathbf{q}_k$ & The 2D location of the $k$-th target\\
  $\mathbf{x}[n]$ & The antenna position of the transmit FA at time slot $n$\\
  $\mathbf{y}[n]$ & The antenna position of the receive FA at time slot $n$\\
  $M_t$ & The number of the transmit FA\\
  $M_r$ & The number of the receive FA\\
  $\bar{n},\bar{N},\bar{\mathcal{N}}$ & The index, number, and set of transmission frames\\
  $\mathbf{s}[n;\bar{n}]$ & The sensing signal of the $\bar{n}$-th frame at time slot $n$\\
  $\mathbf{R}[n]$ & The covariance matrix of the sensing signal $\mathbf{s}[n;\bar{n}]$\\
  $\mathbf{a}_k[n]$ & The steering vectors of the transmit FAs of the UAV for detecting the $k$-th target\\
  $\mathbf{b}_k[n]$ & The steering vectors of the receive FAs of the UAV for detecting the $k$-th target\\
  $\theta_k[n]$ & The vertical AoD from the UAV to the $k$-th target\\
  $\mathbf{W}_k[n]$ & The response matrix of the $k$-th target\\
  $\mathbf{X}_k[n]$ & The transmitted signal from the UAV\\
  $\mathbf{Y}_k[n]$ & The received echo signal at the UAV for estimating the $k$-th target\\
  $\mathbf{N}_k[n]$ & The additive white Gaussian noise\\
  $\mathbf{R}_{\mathbf{x}} [n]$ & The covariance of the transmitted signal\\
  $\boldsymbol{\mathcal{X}}$ & The feasible region of the transmit FAs\\
  $\boldsymbol{\mathcal{Y}}$ & The feasible region of the receive FAs\\
		\bottomrule
	\end{tabular}
\end{table}

\subsection{Sensing Model}

We consider the multi-beam transmission of the sensing signal \cite{lyu2022joint}.
Accordingly, we assume $\bar{N} > M_t$ transmission frames within one time slot, where the set of the frames is denoted as $\bar{\mathcal{N}} = \{ 1,\ldots, \bar{N}\}$. For the $\bar{n}$-th frame, the sensing signal is denoted as $\mathbf{s}[n;\bar{n}]$, which follows a zero-mean circularly symmetric complex Gaussian distribution, and the covariance matrix of $\mathbf{s}[n;\Bar{n}]$ is given by
\begin{equation}
    \mathbf{R}[n] = \mathbb{E} \{ \mathbf{s}[n;\bar{n}] \mathbf{s}^{\mathsf{H}}[n;\bar{n}] \} \succeq \mathbf{0}.
\end{equation}

We consider that the multi-path reflections suffer from the significant attenuation for the LAE mission over the LAWCN \cite{Wang2024Fluid}. Hence, we model the channel model between the UAV and the targets as the single-path propagation model.
Accordingly, the steering vectors of the transmit FAs and the receive FAs of the UAV for detecting the $k$-th target can be respectively expressed as
\begin{align}
& \mathbf{a}_k[n] = \left[
{\mathrm{e}}^{{\mathrm{j}} \frac{2 \pi}{\lambda} x_{1}[n] \sin (\theta_{k}[n]) },  \ldots, 
{\mathrm{e}}^{{\mathrm{j}} \frac{2 \pi}{\lambda} x_{M_{t}}[n] \sin (\theta_{k}[n]) }
\right]^{\mathsf{T}}, \label{transmit FA vector} \\ 
& \mathbf{b}_k[n] = \left[
{\mathrm{e}}^{{\mathrm{j}} \frac{2 \pi}{\lambda} y_{1}[n] \sin (\theta_{k}[n]) },  \ldots, 
{\mathrm{e}}^{{\mathrm{j}} \frac{2 \pi}{\lambda} y_{M_{r}}[n] \sin (\theta_{k}[n]) }
\right]^{\mathsf{T}}, \label{receive FA vector}
\end{align}
where $\lambda$ is the carrier wavelength,
$\theta_{k}[n]$ denotes the vertical angle of departure (AoD) from the UAV to the $k$-target at time slot $n$, and is given by
\begin{equation}
    \theta_{k}[n] = \arcsin \left( \frac{H}{\sqrt{\|\mathbf{q}[n] - \mathbf{q}_{k} \|^{2}+H^2}} \right). \label{theta_m}
\end{equation}
Then, the response matrix of the $k$-th target can be written as
\begin{equation}
    \mathbf{W}_k[n] = \frac{\alpha_{k}}{2d_{k}[n]} \mathbf{b}_k[n] \mathbf{a}_k^{\mathsf{H}}[n], 
\end{equation}
where $\alpha_{k}$ denotes the complex radar cross-section (RCS)~\cite{yuan2021Integrated}, and $d_{k}[n] = \sqrt{\|\mathbf{q}[n] - \mathbf{q}_{k} \|^{2}+H^2}$ represents the distance between the UAV and the $k$-th target.
Accordingly, the received echo signal at the UAV for estimating the $k$-th target can be expressed as
\begin{equation}
    \mathbf{Y}_{k}[n] = \mathbf{W}_k[n] \mathbf{X}_{}[n]  + \mathbf{N}[n],
\end{equation}
where we have
\begin{equation}
    \mathbf{X}_{}[n] = \big[ \mathbf{s}_{}[n;1],\mathbf{s}_{}[n;2],\ldots,\mathbf{s}_{}[n;\bar{N}] \big] \in \mathbb{C}^{M_{t} \times \bar{N}},
\end{equation}
which denotes the transmitted signal from the UAV for target sensing,
and $\mathbf{N}_{}[n] \in \mathbb{C}^{M_{r} \times \Bar{N}}$ represents the additive white Gaussian noise with zero mean and the variance $\sigma_{r}^{2}$ of each entry of $\mathbf{N}_{}[n]$.
Since the number of the transmission frames is very large, we have the following estimation of the covariance of the transmitted signal as
\begin{equation}
  \mathbf{R}_{\mathbf{x}}[n] = \frac{1}{\bar{N}} \mathbf{X}[n] \mathbf{X}^{\mathsf{H}}[n] \approx  \mathbf{R}_{}[n]. \label{covariance_Rx}
\end{equation}
Then, the CRB for estimating the $k$-th target can be given by \eqref{CRB_1}, where $\mathbf{\Psi}_k[n] = \mathbf{b}_k[n] \mathbf{a}^{\mathsf{H}}_k[n]$ and $\tilde{\mathbf{\Psi}}_k[n] = \frac{ \partial \mathbf{\Psi}_k[n]}{ \partial \theta_{k}[n] }$, for simplicity.
According to \cite{liu2022Cramér-Rao}, Eq. \eqref{CRB_1} can be equivalently transformed into Eq. \eqref{CRB_2}, where the detailed transformation is presented in Appendix \ref{crb1to2}.
\begin{figure*}[!t]
\vspace*{-\baselineskip} 
{ \begin{align} 
\mathcal{C}_k[n] &= \frac{ \sigma_{r}^{2} }
{ 2 \left\vert \frac{\alpha_{k}}{2d_{k}[n]} \right\vert^2 \bar{N}
\left( \mathsf{tr}\left( \tilde{\mathbf{\Psi}}_k^{\mathsf{H}}[n] \tilde{\mathbf{\Psi}}_k[n] \mathbf{R}_{\mathbf{x}}[n] \right) - 
\frac{
\left\vert \mathsf{tr}\left( \tilde{\mathbf{\Psi}}_k^{\mathsf{H}}[n] \mathbf{\Psi}_k[n] \mathbf{R}_{\mathbf{x}}[n] \right) \right\vert^2  }{
\mathsf{tr}\left( \mathbf{\Psi}_k^{\mathsf{H}}[n] \mathbf{\Psi}_k[n] \mathbf{R}_{\mathbf{x}}[n] \right)
}   \right)  } \nonumber \\
&= \frac{ 2 d_{k}^{2}[n] \sigma_{r}^{2} \mathsf{tr}\left( \mathbf{\Psi}_k^{\mathsf{H}}[n] \mathbf{\Psi}_k[n] \mathbf{R}_{\mathbf{x}}[n] \right)}
{ \vert \alpha_{k} \vert^2 \bar{N} \left( \mathsf{tr}\left( \tilde{\mathbf{\Psi}}_k^{\mathsf{H}}[n] \tilde{\mathbf{\Psi}}_k[n] \mathbf{R}_{\mathbf{x}}[n] \right) \mathsf{tr}\left( \mathbf{\Psi}_k^{\mathsf{H}}[n] \mathbf{\Psi}_k[n] \mathbf{R}_{\mathbf{x}}[n] \right) - \left\vert \mathsf{tr}\left( \tilde{\mathbf{\Psi}}_k^{\mathsf{H}}[n] \mathbf{\Psi}_k[n] \mathbf{R}_{\mathbf{x}}[n] \right) \right\vert^2 \right)}.
\label{CRB_1}
\end{align} 
\vspace{-\baselineskip}
\begin{align}
\tilde{\mathcal{C}}_k[n] = \frac{ 2 d_{k}^{2}[n] \sigma_{r}^{2} }
 {  \vert \alpha_{k} \vert^2 \bar{N} \left(\frac{2 \pi}{\lambda} \cos (\theta_{k}[n]) \right)^{2} \mathbf{a}_k^{\mathsf{H}}[n] \mathbf{R}_{\mathbf{x}}[n] \mathbf{a}_k[n] \mathbf{y}[n]^{\mathsf{T}} \left( \mathbf{I}_{N_{r}} - \frac{1}{N_{r}} \mathbf{1}_{N_{r}} \mathbf{1}_{N_{r}}^{\mathsf{T}} \right) \mathbf{y}[n] }.  
 \label{CRB_2}
\end{align}
} \hrulefill
\end{figure*}

\subsection{Problem Formulation}
In this work, we aim to minimize the average CRB of all targets during the mission interval, by jointly optimizing the trajectory design of the UAV $\{ \mathbf{q}[n]  \}$, the beamforming $\{ \mathbf{R}_{}[n] \}$, and the antenna positions of the transmit FAs $ \{\mathbf{x}[n] \}$ and the receive FAs $ \{\mathbf{y}[n] \}$.
Hence, the problem can be formulated as
\begin{subequations} 
\begin{flalign}
 (\textbf{P1}):\ & \min_{ \mathbf{q}[n],\mathbf{R}_{}[n],
 \mathbf{x}[n],\mathbf{y}[n] } \quad \frac{1}{N}\frac{1}{K} \sum_{k=1}^{K}\sum_{n=1}^{N} \tilde{\mathcal{C}}_k [n] \nonumber  \\
 {\rm{s.t.}}  \quad &\mathbf{q}[1]= \mathbf{q}_{\rm{I}}, \quad
\mathbf{q}[N]=\mathbf{q}_{\rm{F}}, \label{p1b} \\
 & \| \mathbf{v}[n] \|  \le V_{\rm{max}}, \forall n\in \mathcal{N}\backslash\{1\}, \label{p1c} \\
 & \mathsf{tr}(\mathbf{R}_{}[n]) \le P_{\max},\ \forall n \in \mathcal{N}, \label{p1d}  \\
 & \boldsymbol{\mathcal{X}} \mathbf{x}[n] \succeq \mathbf{I}_{\mathbf{\mathcal{X}}} , \forall n \in \mathcal{N}, \label{p1g} \\
 & \boldsymbol{\mathcal{Y}} \mathbf{y}[n] \succeq \mathbf{I}_{\mathbf{\mathcal{Y}}} , \forall n \in \mathcal{N}, \label{p1h} 
\end{flalign} 
\end{subequations}where
$\mathbf{q}[1]$ and $\mathbf{q}[N]$ are the initial position and final position of the UAV during the mission interval, respectively, as presented in constraint \eqref{p1b}.
Meanwhile, the velocity of the UAV is constrained by the maximum achievable velocity $V_{\max}$ as indicated in constraint \eqref{p1c}.
Constraint \eqref{p1d} limits the transmit power of the UAV that cannot exceed the maximum allowable power $P_{\max}$.
We utilize $\boldsymbol{\mathcal{X}}$ and $\mathbf{I}_{\mathbf{\mathcal{X}}}$ to denote the feasible region constraint of the transmit FA array as illustrated in \eqref{p1g}, i.e., all FAs should move within a region $[0,\mathcal{D}] \triangleq \mathsf{D}$, and any two FAs should keep a minimum distance $\mathcal{D}_{\min}$ to avoid antenna coupling, where $\boldsymbol{\mathcal{X}}$ and $\mathbf{I}_{\mathbf{\mathcal{X}}}$ are given by
\begin{equation}
\boldsymbol{\mathcal{X}} = 
\begin{bmatrix}
-1 & 1 & 0 & 0 & \cdots & 0 & 0 \\
0 & -1 & 1 & 0 & \cdots & 0 & 0 \\
\vdots & \vdots & \vdots & \vdots & \ddots & \vdots & \vdots \\
0 & 0 & 0 & 0 & \cdots & -1 & 1 \\
1 & 0 & 0 & 0 & \cdots & 0 & 0 \\
0 & 0 & 0 & 0 & \cdots & 0 &  -1
\end{bmatrix}_{(M_{t}+1) \times M_{t}},
\end{equation}
\begin{equation}
\mathbf{I}_{\mathcal{X}} = [\mathcal{D}_{\rm{min}}, \mathcal{D}_{\rm{min}}, \cdots, \mathcal{D}_{\rm{min}}, 0, -\mathcal{D}_{\rm{}}]^{\mathsf{T}} \in \mathbb{R}^{(M_{t} + 1) \times 1}.
\end{equation}
Similarly, constraint \eqref{p1h} ensures the feasible region of the receive FAs, where $\boldsymbol{\mathcal{Y}}$ and $\mathbf{I}_{\mathbf{\mathcal{Y}}}$ are given by
\begin{equation}
\boldsymbol{\mathcal{Y}} = 
\begin{bmatrix}
-1 & 1 & 0 & 0 & \cdots & 0 & 0 \\
0 & -1 & 1 & 0 & \cdots & 0 & 0 \\
\vdots & \vdots & \vdots & \vdots & \ddots & \vdots & \vdots \\
0 & 0 & 0 & 0 & \cdots & -1 & 1 \\
1 & 0 & 0 & 0 & \cdots & 0 & 0 \\
0 & 0 & 0 & 0 & \cdots & 0 &  -1
\end{bmatrix}_{(M_{r}+1) \times M_{r}},
\end{equation}
\begin{equation}
\mathbf{I}_{\mathcal{Y}} = [\mathcal{D}_{\rm{min}}, \mathcal{D}_{\rm{min}}, \cdots, \mathcal{D}_{\rm{min}}, 0, -\mathcal{D}_{\rm{}}]^{\mathsf{T}} \in \mathbb{R}^{(M_{r} + 1) \times 1}.
\end{equation}

\section{Proposed Solution}
First, due to the optimization variables, e.g., the trajectory of the UAV, the beamforming, and the antenna position, are coupled in the denominator of the fractional objective function, the problem \textbf{P1} becomes challenging to solve, and leads to the non-convexity with increased complexity.
However, minimizing the original objective is equivalent to maximizing its reciprocal.
Hence, the original problem \textbf{P1} can be reformulated as problem \textbf{P2}, which is given by
\begin{subequations} 
\begin{flalign}
 (\textbf{P2}):\ & \max_{ \mathbf{q}[n],\mathbf{R}_{}[n],
 \mathbf{x}[n],\mathbf{y}[n] } \quad \frac{1}{N}\frac{1}{K} \sum_{k=1}^{K}\sum_{n=1}^{N} \frac{1}{\tilde{\mathcal{C}}_k [n]} \nonumber  \\
 &\quad \quad{\rm{s.t.}}  \quad \eqref{p1b}\text{-}\eqref{p1h} \nonumber
\end{flalign} 
\end{subequations}
The reformulated problem \textbf{P2} is still non-convex due to the non-convexity of the objective function.
To address the issue, we decouple the problem \textbf{P2} into three subproblems, and utilize an AO-based algorithm to solve the overall problem.

\subsection{UAV Trajectory Optimization}
Given the beamforming $\{ \mathbf{R}_{}[n] \}$, and the antenna positions of the FAs $ \{\mathbf{x}[n] \}$ and $ \{\mathbf{y}[n] \}$, the original problem \textbf{P1} can be simplified as
\begin{subequations} 
\begin{flalign}
 (\textbf{P3}):\ & \max_{ \mathbf{q}[n]  } \quad \frac{1}{N}\frac{1}{K} \sum_{k=1}^{K}\sum_{n=1}^{N} \frac{1}{\tilde{\mathcal{C}}_k [n]} \nonumber  \\
 {\rm{s.t.}}  \quad &\mathbf{q}[1]= \mathbf{q}_{\rm{I}}, \quad
\mathbf{q}[N]=\mathbf{q}_{\rm{F}},  \\
 & \| \mathbf{v}[n] \|  \le V_{\rm{max}}, \forall n\in \mathcal{N}\backslash\{1\}. 
\end{flalign} 
\end{subequations}
Firstly, the steering vector $\mathbf{a}_k[n]$ is non-convex and non-linear with respect to the trajectory $\mathbf{q}[n]$ of the UAV.
To tackle this difficulty, we approximate $\mathbf{a}_k[n]$ in the $(l+1)$-th iteration utilizing the trajectory obtained in the $(l)$-th iteration \cite{10654366}, which can be expressed by
\begin{equation} 
\mathbf{a}^{(l)}_k[n] = \left[
{\mathrm{e}}^{{\mathrm{j}} \frac{2 \pi}{\lambda} x_{1}[n] \sin \left(\theta_{k}^{(l)}[n]\right) },  \ldots, 
{\mathrm{e}}^{{\mathrm{j}} \frac{2 \pi}{\lambda} x_{N_{t}}[n] \sin \left(\theta_{k}^{(l)}[n]\right) }
\right]^{\mathsf{T}}, 
\end{equation}
where $\theta_{k}^{(l)}[n] = \arcsin \left( \frac{H}{\sqrt{\|\mathbf{q}^{(l)}[n] - \mathbf{q}_{k} \|^{2}+H^2}} \right), k \in \mathcal{K}$.
Then, the approximated CRB value of the $k$-th target is given by
\begin{equation}
    \tilde{\mathcal{C}}_k^{(l)}[n]=\frac{2d_k^2[n]\sigma_r^2}
    {
    \mathbf{\Lambda}_k^{(l)} [n] \mathbf{\Omega}_k[n]
    },
\end{equation}
where
\begin{equation}
    \mathbf{\Lambda}_k^{(l)} [n] = \left(\frac{2 \pi}{\lambda} \cos (\theta_{k}^{(l)}[n]) \right)^{2} \left(\mathbf{a}_k^{(l)}\right)^{\mathsf{H}}[n] \mathbf{R}_{\mathbf{x}}[n] \mathbf{a}_k^{(l)}[n],
\end{equation}
\begin{equation}
    \mathbf{\Omega}_k[n] = \vert\alpha_k\vert^2 \bar{N}
    \mathbf{y}[n]^{\mathsf{T}} \left( \mathbf{I}_{N_{r}} - \frac{1}{N_{r}} \mathbf{1}_{N_{r}} \mathbf{1}_{N_{r}}^{\mathsf{T}} \right) \mathbf{y}[n].
\end{equation}
Then, problem \textbf{P3} can be transformed as problem \textbf{P3}.$l$, which is given by
\begin{subequations} 
\begin{flalign}
 (\textbf{P3}.l):\ & \max_{ \mathbf{q}[n] } \quad \frac{1}{N}\frac{1}{K} \sum_{k=1}^{K}\sum_{n=1}^{N} \frac{1}{\tilde{\mathcal{C}}_k^{(l)} [n]} \nonumber  \\
 {\rm{s.t.}}  \quad &\mathbf{q}[1]= \mathbf{q}_{\rm{I}}, \quad
\mathbf{q}[N]=\mathbf{q}_{\rm{F}},  \\
 & \| \mathbf{v}[n] \|  \le V_{\rm{max}}, \forall n\in \mathcal{N}\backslash\{1\}. 
\end{flalign} 
\end{subequations}
Problem \textbf{P3}.$l$ is a standard convex optimization problem, which can be solved through existing convex optimizer.

\subsection{Transmit Beamforming Optimization}
Given the trajectory $\{ \mathbf{q}_{}[n] \}$, and the antenna positions of the FAs $ \{\mathbf{x}[n] \}$ and $ \{\mathbf{y}[n] \}$, the original problem \textbf{P1} can be reformulated as
\begin{subequations} 
\begin{flalign}
 (\textbf{P4}):\ & \max_{ \mathbf{R}_{}[n] } \quad \frac{1}{N}\frac{1}{K} \sum_{k=1}^{K}\sum_{n=1}^{N} \frac{1}{\tilde{\mathcal{C}}_k [n]} \nonumber  \\
 {\rm{s.t.}}  \quad &
 \mathsf{tr}(\mathbf{R}_{}[n]) \le P_{\max},\ \forall n \in \mathcal{N}.
\end{flalign} 
\end{subequations}
Since the optimization of the beamforming is independent between time slots, the beamforming variables $\mathbf{R}[n]$ can be decoupled across different time slots. Hence, we decompose problem \textbf{P4} into $N$ sub-slot-problems, where the decoupled problem is given by
\begin{subequations} 
\begin{flalign}
 (\textbf{P4}.n):\ & \max_{ \mathbf{R}_{}[n] } \quad \frac{1}{K} \sum_{k=1}^{K} \frac{1}{\tilde{\mathcal{C}}_k [n]} \nonumber  \\
 {\rm{s.t.}}  \quad &
 \mathsf{tr}(\mathbf{R}_{}[n]) \le P_{\max}.
\end{flalign} 
\end{subequations}
Problem \textbf{P4}.$n$ is a standard convex optimization problem, and we can solve it through conventional convex solver.

\subsection{Antenna Position Optimization}
Given the trajectory $\{ \mathbf{q}_{}[n] \}$, and the beamforming $\{ \mathbf{R}_{}[n] \}$, problem \textbf{P1} can be reformulated as
\begin{subequations} 
\begin{flalign}
 (\textbf{P5}):\ & \max_{ 
 \mathbf{x}[n],\mathbf{y}[n] } \quad \frac{1}{N}\frac{1}{K} \sum_{k=1}^{K}\sum_{n=1}^{N} \frac{1}{\tilde{\mathcal{C}}_k [n]}  \nonumber  \\
 {\rm{s.t.}}  \quad 
 & \boldsymbol{\mathcal{X}} \mathbf{x}[n] \succeq \mathbf{I}_{\mathbf{\mathcal{X}}} , \forall n \in \mathcal{N},  \\
 & \boldsymbol{\mathcal{Y}} \mathbf{y}[n] \succeq \mathbf{I}_{\mathbf{\mathcal{Y}}} , \forall n \in \mathcal{N}.
\end{flalign} 
\end{subequations}
The variables $\mathbf{x}[n]$, and $\mathbf{y}[n]$ are temporally decoupled and can be optimized independently per time slot. To reduce computational complexity, we decompose problem \textbf{P5} into $N$ sub-slot-problems. The optimization for the $n$-th time slot is expressed as
\begin{subequations} 
\begin{flalign}
 (\textbf{P5}.n):\ & \max_{ 
 \mathbf{x}[n],\mathbf{y}[n] } \quad \frac{1}{K} \sum_{k=1}^{K} \frac{1}{\tilde{\mathcal{C}}_k [n]}  \nonumber  \\
 {\rm{s.t.}}  \quad 
 & \boldsymbol{\mathcal{X}} \mathbf{x}[n] \succeq \mathbf{I}_{\mathbf{\mathcal{X}}} , \label{p51} \\
 & \boldsymbol{\mathcal{Y}} \mathbf{y}[n] \succeq \mathbf{I}_{\mathbf{\mathcal{Y}}} \label{p52} .
\end{flalign} 
\end{subequations}
Due to the non-convex and high-dimensional nature of the solution space, conventional optimization techniques and exhaustive search are computationally prohibitive. To address this, we employ a particle swarm optimization (PSO) approach, as described in \cite{10741192, 10818453}.
We initialize a swarm of $P$ particles, each representing a candidate FA position solution. The initial particle set is denoted as
\begin{equation}
    \mathbf{P}^{(0)}[n] = \left\{
        \mathbf{p}_1^{(0)}[n],
        \mathbf{p}_2^{(0)}[n], \ldots,
        \mathbf{p}_P^{(0)}[n]
    \right\},
\end{equation}
where the $p$-th particle is given by
\begin{equation}
    \mathbf{p}_p^{(0)}[n] = \left\{
        {x}_1^{(0)}[n],
        \ldots,
        {x}_{M_t}^{(0)}[n],
        {y}_1^{(0)}[n],
        \ldots,
        {y}_{M_r}^{(0)}[n]
    \right\}.
\end{equation}

The initial velocity of each particle is given by
\begin{equation}
    \mathbf{V}^{(0)}[n] = \left\{ \mathbf{v}_1^{(0)}[n], \mathbf{v}_2^{(0)}[n], \ldots, \mathbf{v}_P^{(0)}[n] \right\}.
\end{equation}
Let $\mathbf{p}_p^\text{best}[n]$ denote the best position found by the $p$-th particle, and $\mathbf{p}^\text{global}[n]$ denote the global best position across the particle swarm. The velocity of each particle is updated as
\begin{equation}
\begin{split}
    \mathbf{v}_p^{(t+1)}[n] \leftarrow
    \omega \cdot \mathbf{v}_p^{(t)}[n] &+
    c_1r_1 \left( \mathbf{p}_p^\text{best}[n] -\mathbf{p}_p^{(t)}[n]\right) \\ &+
    c_2r_2 \left( \mathbf{p}^\text{global}[n] -  \mathbf{p}_p^{(t)}[n]\right),
\end{split}
\label{pvel}
\end{equation}
where $t$ is the iteration index, $c_1$ and $c_2$ are acceleration coefficients, and $r_1, r_2 \sim \mathcal{U}(0,1)$ introduce stochasticity. The inertia weight $\omega$ is dynamically adjusted to balance exploration and exploitation, which is expressed as
\begin{equation}
    \omega = \omega_{\max} - 
    \frac{(\omega_{\max}-\omega_{\min}) \cdot t}{T_{\max}},
    \label{inertia_weight}
\end{equation}
where $\omega_{\max}$ and $\omega_{\min}$ are the maximum and minimum values of the inertia weight $\omega$, and
$T_{\max}$ is the maximum number of iterations for optimizing problem \textbf{P5}.$n$.

The position of the $p$-th particle is updated as
\begin{equation}
    \mathbf{p}_{p}^{(t+1)} [n] \leftarrow
    \mathsf{Proj}_{\mathsf{D}}\left(
        \mathbf{p}_{p}^{(t)} [n] +
        \mathbf{v}_p^{(t+1)}[n]
    \right),
    \label{ppos}
\end{equation}
where $\mathsf{Proj}_{\mathsf{D}}(\cdot)$ denotes the projection onto the feasible antenna position region $\mathsf{D}$. The objective function and constraints are jointly considered in the fitness function for the $t$-th iteration of the $p$-th particle, which is given by
\begin{equation}
    \mathcal{F}\left( \mathbf{p}_{p}^{(t)} [n] \right) = 
     \frac{1}{K} \sum_{k=1}^{K} \frac{1}{\tilde{\mathcal{C}}_k [n]} -
     \eta \times \left \vert \mathcal{V}\left( \mathbf{p}_{p}^{(t)} [n] \right) \right \vert,
     \label{PSOfitness}
\end{equation}
where $\eta$ is a large positive factor to ensure the penalty of the solution violation, and $\mathcal{V}\left( \mathbf{p}_{p}^{(t)} [n] \right)$ denotes the set of the FAs violating the constraints \eqref{p51} and \eqref{p52}.
The detailed procedure of the PSO-based algorithm for solving problem \textbf{P5}.$n$ is summarized in Algorithm \ref{algorithmPSO}.

\begin{algorithm}[t] \footnotesize
	\caption{{The PSO-based Algorithm for Solving \textbf{P5}.$n$.}}
		\begin{algorithmic}[1]
			\REQUIRE {An initial feasible solution ${\bf{x}}[n]$ and ${\bf{y}}[n]$;}
			\STATE
			\textbf{Initialize:} The particles swarm with their initial positions $\mathbf{{P}}^{(0)} [n]$ and velocities $\mathbf{{V}}^{(0)} [n]$, the best position of each particle, and the global best position across the particle swarm;
            \STATE
            Evaluate the initial fitness value;
            \FOR{$t$ $=1$ to $T_{\max}$}
                \STATE
                Update the inertia weight via \eqref{inertia_weight};
            \FOR{$p$ $=1$ to $P$}
                \STATE
                Update the velocity of the $p$-th particle via \eqref{pvel};
                \STATE
                Update the position of the $p$-th particle via \eqref{ppos};
                \STATE
                Calculate the fitness value via \eqref{PSOfitness};
                \IF{the current fitness value is greater than the fitness value of the local best position}
                \STATE
                Update the local best position as the current particle position;
                \ENDIF
                \IF{the current fitness value is greater than the fitness value of the global best position}
                \STATE
                Update the global best position as the current particle position;
                \ENDIF
            \ENDFOR
            \ENDFOR
            \ENSURE
			{${\mathbf{x}}^{} [n](\mathbf{{P}}^{(T_{\max})})$ and $\mathbf{y}[n](\mathbf{{P}}^{(T_{\max})})$ determined by the particle swarm $\mathbf{{P}}^{(T_{\max})} [n]$}.
		\end{algorithmic}
        \label{algorithmPSO}
\end{algorithm}

\subsection{Algorithm Analysis}

\begin{figure}[t]
    \centering
    \includegraphics[width=0.825\linewidth]{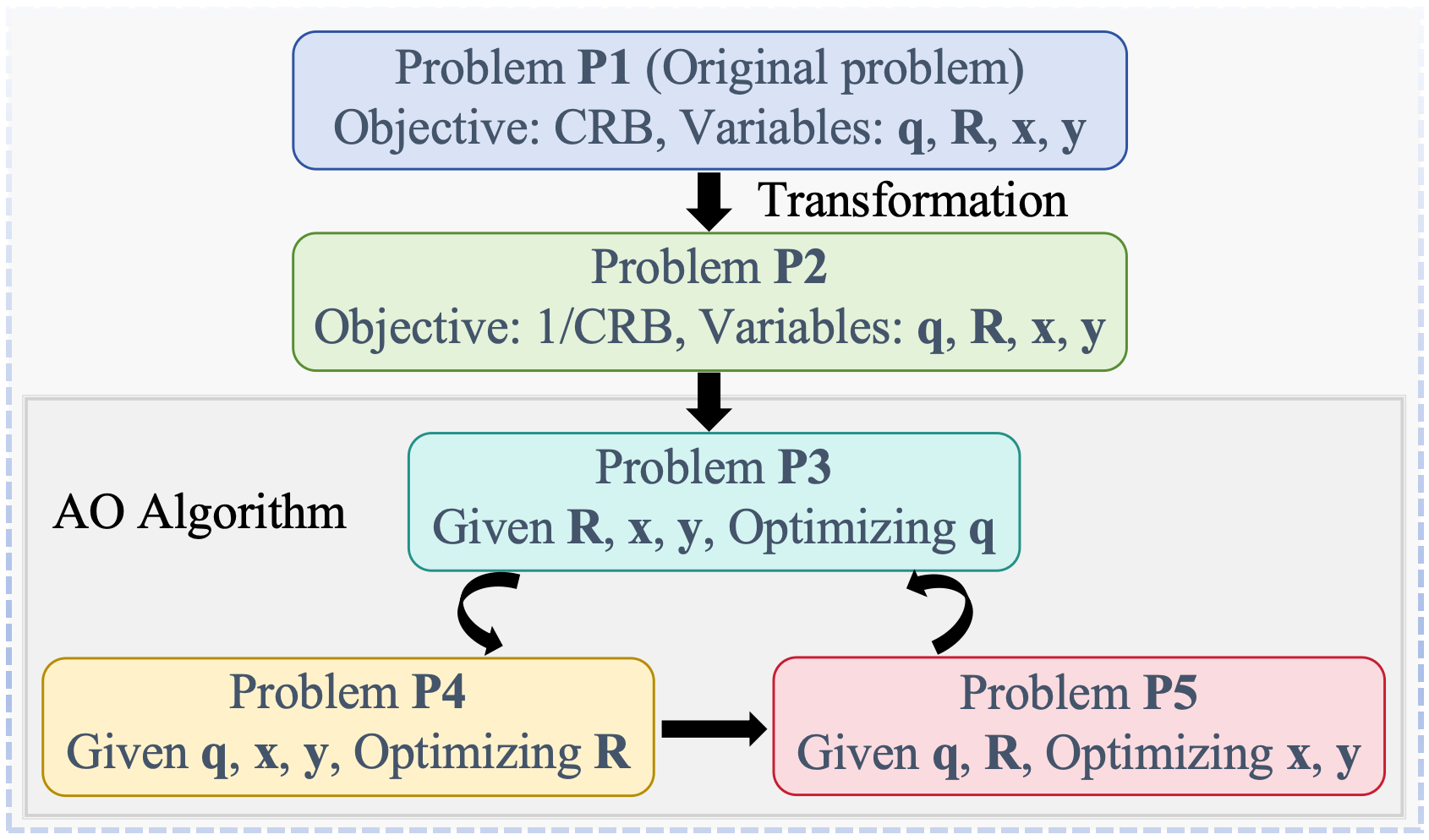}
	\caption{Algorithm procedure of the proposed scheme.}
	\label{algpro}
\end{figure}
The overall procedure of the proposed scheme is illustrated in Fig. \ref{algpro}. Meanwhile, the detailed algorithm of the proposed scheme is summarized in Algorithm~\ref{algorithm1}.
{Then, the following propositions analyze the convergence behavior and computational complexity of Algorithm~\ref{algorithm1}.}
{
\begin{proposition}
    Algorithm~\ref{algorithm1} is guaranteed to converge.
    \label{pconvergence}
\end{proposition}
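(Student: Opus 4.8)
The plan is to establish convergence through the standard monotone-boundedness argument for alternating optimization. First I would fix a compact notation for the objective of problem \textbf{P2}, writing $\Phi\big(\{\mathbf{q}[n]\},\{\mathbf{R}[n]\},\{\mathbf{x}[n]\},\{\mathbf{y}[n]\}\big) = \frac{1}{NK}\sum_{k}\sum_{n} 1/\tilde{\mathcal{C}}_k[n]$, and let $\Phi^{(l)}$ denote its value after the $l$-th AO round. The objective is then to show that the sequence $\{\Phi^{(l)}\}$ is (i) monotonically non-decreasing and (ii) bounded above, so that the monotone convergence theorem forces convergence.

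For the monotonicity part, I would track $\Phi$ across the three block updates within one round and argue that none of them can decrease it. In the beamforming step, each \textbf{P4}.$n$ is a convex program solved to global optimality per time slot while the incumbent $\mathbf{R}[n]$ remains feasible (it already satisfies $\mathsf{tr}(\mathbf{R}[n]) \le P_{\max}$), so the returned value is at least the incumbent value. For the trajectory step I would invoke the successive-convex-approximation structure: the surrogate $\tilde{\mathcal{C}}_k^{(l)}[n]$ is formed by freezing the steering vector and the angle $\theta_k^{(l)}[n]$ at the incumbent trajectory, so the surrogate objective is tight at the expansion point, and solving the convex \textbf{P3}.$l$ to optimality cannot decrease the surrogate value. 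For the antenna-position step, PSO is seeded with the incumbent feasible FA positions and carries the global-best particle forward across iterations, so its fitness is non-decreasing, and a sufficiently large penalty factor $\eta$ ensures the returned solution is feasible and does not worsen $\Phi$.

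Chaining the three inequalities yields $\Phi^{(l+1)} \ge \Phi^{(l)}$ for every $l$. Next I would establish the upper bound: the power constraint caps $\mathsf{tr}(\mathbf{R}[n]) \le P_{\max}$, the feasible-region constraints \eqref{p1g}–\eqref{p1h} confine the FA positions to the finite aperture $[0,\mathcal{D}]$, and the positive altitude $H>0$ keeps $d_k[n] \ge H$ bounded away from zero while the endpoint and velocity constraints keep the trajectory in a compact domain. Consequently each term $1/\tilde{\mathcal{C}}_k[n]$ is bounded above by a finite constant, and so is $\Phi$. A monotone sequence bounded above converges, which proves the claim.

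I expect the main obstacle to be the rigorous justification of the trajectory update inside the SCA framework: to pass from "the surrogate does not decrease" to "the true objective $\Phi$ does not decrease," I must verify that the frozen-steering-vector surrogate genuinely minorizes the true reciprocal-CRB objective (tight at the expansion point and lower-bounding elsewhere), since only then does improving the surrogate provably improve $\Phi$. A secondary subtlety is that PSO is a metaheuristic with no intrinsic optimality guarantee, so the argument that its output never falls below the incumbent rests entirely on seeding the swarm with the previous feasible solution and retaining the global best, rather than on any optimality property of the search itself.
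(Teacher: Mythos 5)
Your proposal is correct and follows essentially the same route as the paper's own proof in Appendix~B: you chain per-block non-decrease inequalities across the trajectory, beamforming, and antenna-position updates, then combine monotonicity with boundedness (from the power budget, the finite FA aperture, the altitude bound $d_k[n]\ge H$, and the compact trajectory domain) to conclude via the monotone convergence theorem. The two caveats you flag --- that the frozen-steering-vector surrogate in \textbf{P3}.$l$ is only tight at the expansion point rather than a certified minorant of the true objective, and that the PSO step's non-decrease rests entirely on seeding with the incumbent and retaining the global best rather than on any optimality property --- are genuine subtleties that the paper's proof asserts without justification in its inequalities \eqref{pps1} and \eqref{pps3}, so your treatment is, if anything, the more careful one.
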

\begin{proof}
    Please refer to the Appendix \ref{propostionconv}.
\end{proof}
}

{
\begin{proposition}
The worst-case computational complexity of Algorithm~\ref{algorithm1} can be given by
\begin{equation}
\begin{split}
    \digamma_{\mathrm{Alg2}} & = \mathcal{O}\Big( l_{\rm max}\left[ (2N)^{3.5} + 
(NM_{t}^{2})^{3.5} \right]\log \left(\epsilon^{-1}\right) \\
&\quad\quad\ + N t_{\rm{max}} P (M_{t} + M_{r}) \Big).
\end{split}
\label{cca2}
\end{equation}
\end{proposition}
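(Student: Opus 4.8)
The plan is to bound the total running time of Algorithm~\ref{algorithm1} by summing the per-iteration cost of the three subproblems solved within one alternating-optimization (AO) pass and then scaling by the number $l_{\max}$ of outer AO iterations. First I would treat the UAV trajectory subproblem \textbf{P3}.$l$: after the surrogate linearization it is a convex program whose decision variables are the horizontal coordinates $\{\mathbf{q}[n]\}_{n=1}^{N}$, i.e. $2N$ real scalars, subject to the boundary and velocity constraints. Solving it with a standard interior-point solver costs $\mathcal{O}((2N)^{3.5}\log(\epsilon^{-1}))$, where $\epsilon$ is the target accuracy; the exponent $3.5$ is the product of the $\mathcal{O}(\sqrt{\cdot})$ iteration count of a self-concordant barrier method and the $\mathcal{O}(\cdot^{3})$ cost of each Newton step in the number of variables.

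Next I would handle the beamforming stage. Problem \textbf{P4} decouples across the $N$ time slots into the slot problems \textbf{P4}.$n$, each being a semidefinite program in the single Hermitian PSD matrix $\mathbf{R}[n]\in\mathbb{C}^{M_{t}\times M_{t}}$, hence carrying $\mathcal{O}(M_{t}^{2})$ real degrees of freedom. Each such SDP is solved in $\mathcal{O}((M_{t}^{2})^{3.5}\log(\epsilon^{-1}))$ operations, and summing over the $N$ slots with the loose bound $\sum_{n=1}^{N}(M_{t}^{2})^{3.5}\le (N M_{t}^{2})^{3.5}$ aggregates the beamforming work into $\mathcal{O}((N M_{t}^{2})^{3.5}\log(\epsilon^{-1}))$. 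Combining the trajectory and beamforming contributions and multiplying by $l_{\max}$ reproduces the first bracketed term of \eqref{cca2}.

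For the antenna-position subproblem \textbf{P5}, solved by the PSO routine of Algorithm~\ref{algorithmPSO}, I would count the work of one swarm: over $t_{\max}$ iterations and $P$ particles, each velocity and position update in \eqref{pvel}--\eqref{ppos} acts on a candidate vector of dimension $M_{t}+M_{r}$, so a single PSO run costs $\mathcal{O}(t_{\max}P(M_{t}+M_{r}))$; executing it for each of the $N$ temporally decoupled slots gives $\mathcal{O}(N t_{\max}P(M_{t}+M_{r}))$, which is the second additive term of \eqref{cca2}. Adding the three pieces and retaining only the dominant factors yields the claimed expression.

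The hard part will not be any single calculation but justifying the aggregation choices cleanly. In particular, the collapse $\sum_{n}(M_{t}^{2})^{3.5}\le (N M_{t}^{2})^{3.5}$ is a deliberately loose upper bound, since the tight per-slot count is $N(M_{t}^{2})^{3.5}$, and I would state explicitly that the reported figure is a worst-case estimate. I would also argue that the fitness evaluation \eqref{PSOfitness} and the feasibility projection $\mathsf{Proj}_{\mathsf{D}}(\cdot)$ inside each PSO step do not exceed the stated $\mathcal{O}(M_{t}+M_{r})$ per-particle cost, and that lower-order contributions such as swarm initialization are safely absorbed into the big-$\mathcal{O}$ notation. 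Finally, I would make precise how the outer factor $l_{\max}$ distributes over the convex stages that are re-solved to accuracy $\epsilon$ in every round, leaving \eqref{cca2} in its stated form.
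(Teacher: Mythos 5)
Your proposal follows essentially the same argument as the paper's proof: cite the interior-point complexities $\mathcal{O}\big((2N)^{3.5}\log(\epsilon^{-1})\big)$ for \textbf{P3}.$l$ and $\mathcal{O}\big((M_t^{2})^{3.5}\log(\epsilon^{-1})\big)$ for each \textbf{P4}.$n$, charge $\mathcal{O}\big(t_{\max}P(M_t+M_r)\big)$ per PSO run of Algorithm~\ref{algorithmPSO}, and multiply by the relevant iteration counts to assemble \eqref{cca2}. You are in fact more explicit than the paper on one point---acknowledging that collapsing the $N$ slot-wise SDP costs into $(NM_t^{2})^{3.5}$ is a deliberately loose worst-case bound (the tight per-slot aggregate is $N(M_t^{2})^{3.5}$), a step the paper's proof asserts without comment.
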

\begin{proof}
    In Algorithm~\ref{algorithm1}, the subproblem \textbf{P3}.($l$) and the subproblem \textbf{P4}.($n$) are solved using the interior-point method \cite{10654366}, with computational complexities $\mathcal{O}\big((2N)^{3.5}\log(\varepsilon^{-1})\big)$ and $\mathcal{O}\big((M_{t}^{2})^{3.5}\log(\varepsilon^{-1})\big)$, respectively. 
Furthermore, the PSO algorithm is employed to solve subproblem \textbf{P5}.($n$), with a computational complexity of $\mathcal{O} ( t_{\rm{max}} P (M_{t} + M_{r})   ) $~\cite{10693833}.
Therefore, the worst-case computational complexity of  Algorithm~\ref{algorithm1} can be derived as \eqref{cca2}.
\end{proof}
}

\begin{algorithm}[t] \footnotesize
\caption{{AO-based Algorithm for Solving Problem (\textbf{P1}).}}
\label{algorithm1} 
\begin{algorithmic}[1]
\REQUIRE
{An initial feasible solution $\{\mathbf{q}^{(0)}[0]\}$, $ \mathbf{R}_{}^{(0)}[n] $, $ \{\mathbf{x}^{(0)}[n] \}$ and $ \{\mathbf{y}^{(0)}[n] \}$, iteration index $ l = 1 $, maximum iteration number $l_{\rm{max}}$, accuracy threshold $ \varepsilon > 0 $, iteration index of the PSO $t = 1$, and maximum iteration number of the PSO $t_{{\max}}$}.
\STATE
\textbf{Repeat:}
\STATE Given $\{\mathbf{R}^{(l)}[n]\}$, $ \{\mathbf{x}^{(l)}[n] \}$, and $ \{\mathbf{y}^{(l)}[n] \}$, solve the problem \textbf{P3}.$l$ to obtain $ \mathbf{q}_{}^{(l+1)}[n]$.
\FOR{{$n$ $=1$ to $N$}}
\STATE Given $\{\mathbf{q}^{(l+1)}[n]\}$, $  \mathbf{x}_{}^{(l)}[n] $, and $ \{\mathbf{y}^{(l)}[n] \}$, solve the problem \textbf{P4}.$n$ to obtain $ \{\mathbf{R}^{(l+1)}[n] \}$.
\STATE Given $\{\mathbf{q}^{(l+1)}[n]\}$ and $ \{\mathbf{R}^{(l+1)}[n] \}$,  update $ \{\mathbf{x}^{(l+1)}[n] \}$ and $ \{\mathbf{y}^{(l+1)}[n] \}$ according to Algorithm \ref{algorithmPSO}.
\ENDFOR
\STATE Update the objective function value of problem \textbf{P2}.
\STATE Update $ l = l + 1 $.
\STATE
\textbf{Until:} the increase of the value of the objective function between two
adjacent iterations is smaller than $ \varepsilon $ or $l > l_{\max}$.
\ENSURE
{The optimized solution for the UAV trajectory, the transmit beamforming, and the antenna position.}
\end{algorithmic} 
\end{algorithm}

\section{Numerical Results}
In this section, we present the numerical results to demonstrate the performance of the proposed UAV-enabled FAS for multi-target wireless sensing over the LAWCNs.
As illustrated in Fig. \ref{figs1}, the UAV flies over an $800 \mathrm{m} \times 800 \mathrm{m}$ LAE mission region to sense $K = 6$ targets at an altitude $H = 100 \mathrm{m}$. The maximum velocity of the UAV is $20 \mathrm{m/s}$. The mission period $T = 45 $ seconds. The number of time slots is $N = 20$. The number of transmission time frames is $\bar{N} = 200$. The number of the transmit FAs at the UAV is $N_t = 12$, and the number of the receive FAs at the UAV is $N_r = 12$. The noise $\sigma_r^2$ is $-70 \mathrm{dBm}$. The RCS of each target is $1\times 10^{-6} $. The feasible antenna region of the FAs is $[0, 20\lambda]$, where the carrying wavelength $\lambda = 0.0107 \mathrm{m}$, and the minimum distance between any two FAs is $\mathcal{D}_{\min} = 0.5 \lambda$. Besides, in the PSO-based algorithm, the maximum iteration number $T_{\max} = 50$, the number of the particles is $P = 50$, the acceleration coefficients are $c_1 = 1.5$ and $c_2 = 1.5$, and the maximum and minimum values of the inertia weight are $\omega_{\max} = 0.9$ and $\omega_{\min} = 0.4$.

To verify the effectiveness of our proposed scheme, the following three benchmark schemes are compared.
\begin{itemize}
    \item Transmit FA optimization (TFAO) scheme: In this scheme, only the transmit antennas are the FAs and the position of the transmit FAs are optimized, while the receive antennas are uniform linear array.
    \item Sparse uniform linear array (SULA) scheme: In this scheme, the transmit antennas and receive antennas are configured as two uniform linear arrays, where the distance between two adjacent antennas has the largest configurable distance, i.e., $\mathcal{D}/{(N_t-1)}$ for the transmit antennas and $\mathcal{D}/{(N_t-1)}$ for the receive antennas, respectively.
    \item Dense uniform linear array (DULA) scheme: In this scheme, the transmit antennas and receive antennas are also configured as two uniform linear arrays, where the distance between two adjacent antennas is configured as the half of the carrying wavelength.
\end{itemize}

\begin{figure}[t]
    \centering
    \includegraphics[width=0.78\linewidth]{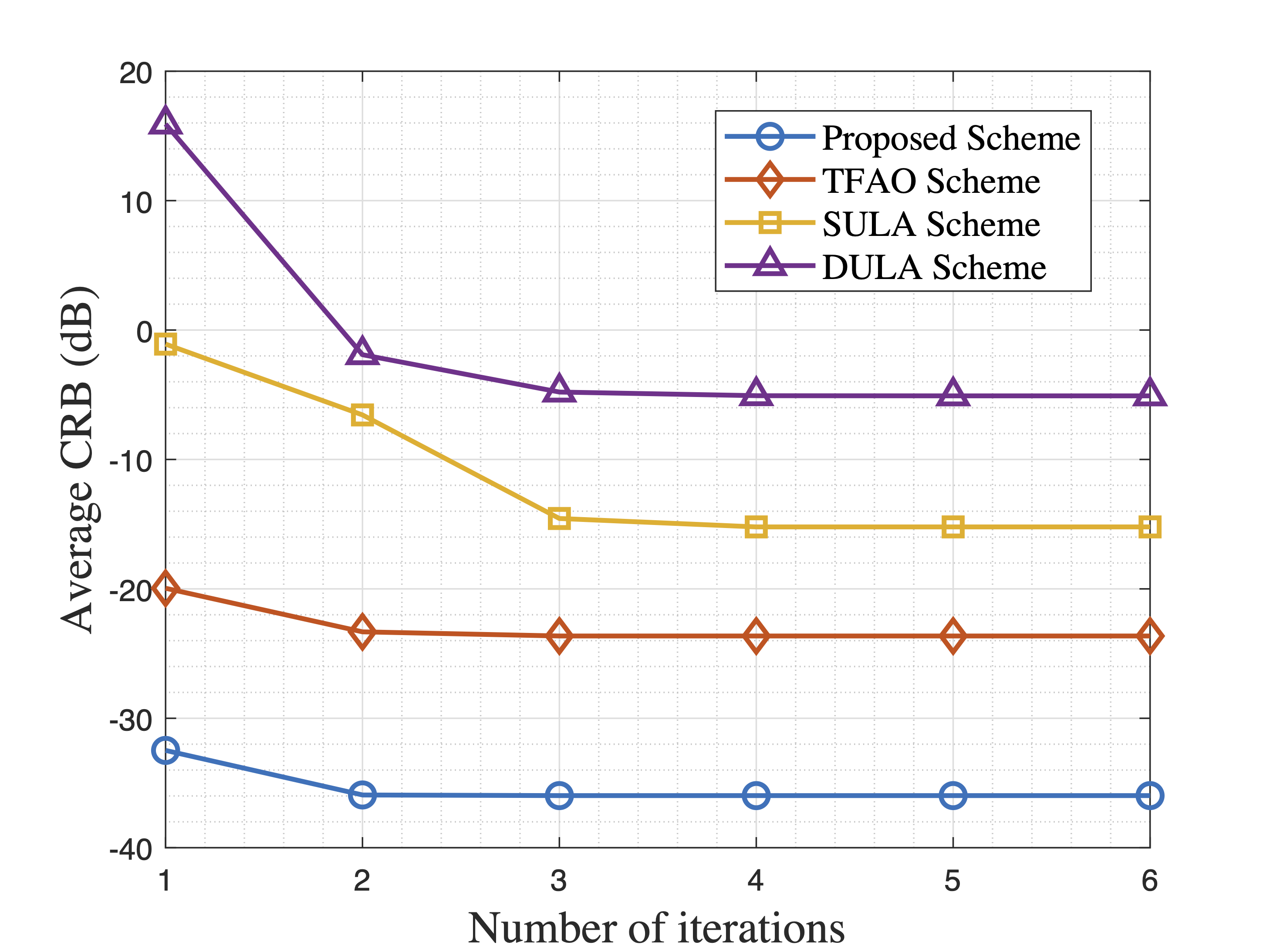}
	\caption{Convergence behavior of different schemes.}
	\label{figs1}
\end{figure}

{
Fig. \ref{figs1} illustrates the convergence behavior of the proposed scheme and the benchmark schemes.
As we can observe from Fig. \ref{figs1}, the proposed scheme achieves the fastest convergence rate. The reason is that, our proposed scheme is capable of fully leveraging the spatial exploration capability of the FAs, which contributes to better channel quality improvement and thereby enhances the overall CRB value. Furthermore, the proposed scheme also converges to the lowest average CRB performance, owing to the joint optimization of the trajectory, the beamforming design, and the FA positioning.
}

\begin{figure}[t]
    \centering
    \includegraphics[width=0.78\linewidth]{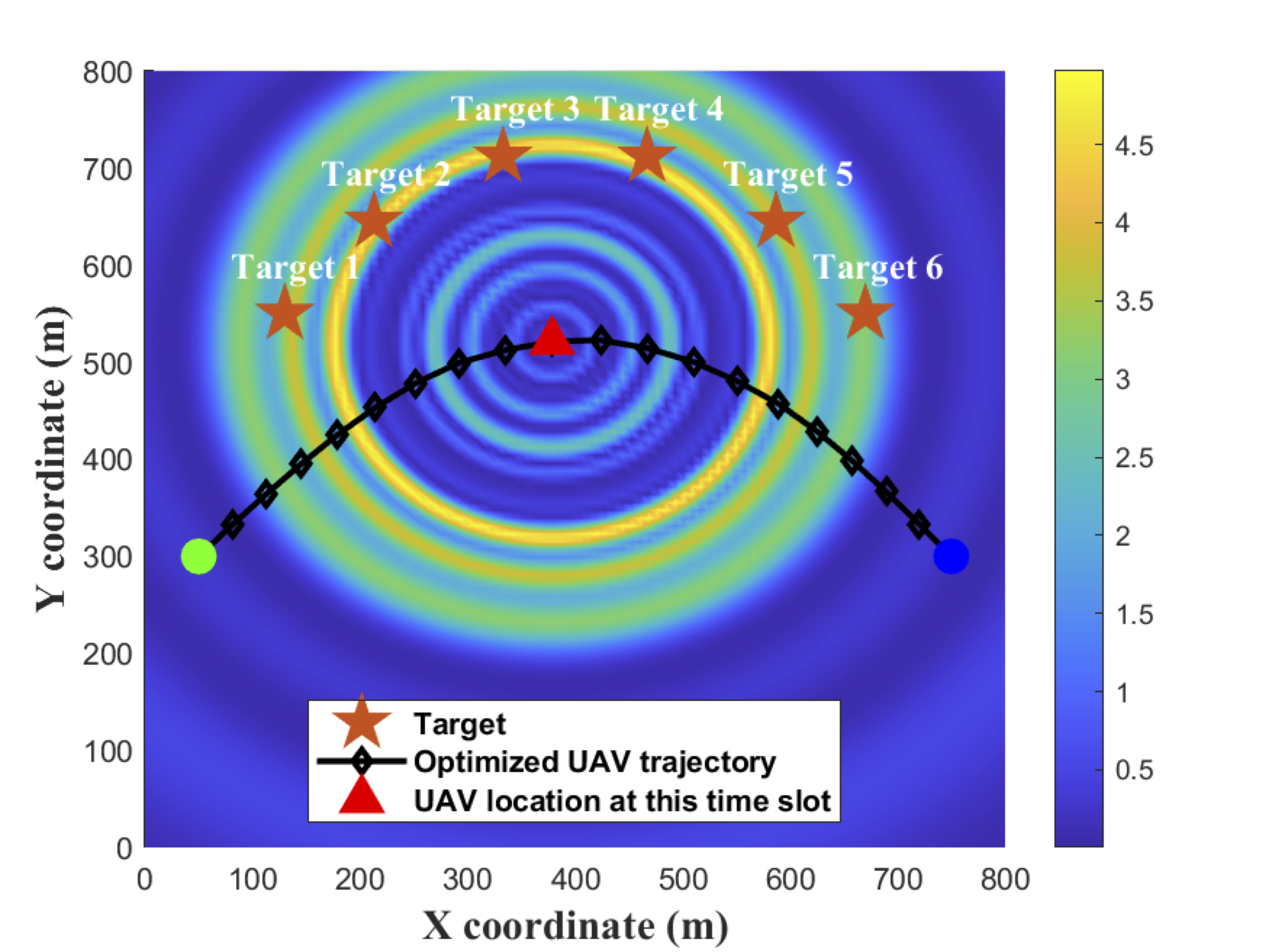}
	\caption{Achievable beampattern gains at time slot $10$.}
	\label{figs2}
\end{figure}

Fig. \ref{figs2} presents the optimized UAV trajectory during the mission interval, and the location of each target. Meanwhile, the achievable beampattern gains toward targets at time slot $10$ is also presented, where the position of the UAV at time slot $10$ is marked in a red triangle.
As we can observe, since the trajectory of the UAV is jointly optimized in our proposed scheme, the UAV deliberately approaches the targets during its flight from the initial position to the final position, thereby achieving higher channel quality and improved sensing performance.
Furthermore, at time slot $10$, the joint optimization of the beamforming and the antenna positions of the transmit and receive FAs takes into account the current position of the UAV as well as the positions of all targets. As a result, the UAV directs more beams toward the region where the targets are located, enhancing the beampattern gain. Additionally, Fig. \ref{figs3} illustrates the achievable CRB performance for a selected set of targets, i.e., the target $1$, $3$, and $5$ illustrated in Fig. \ref{figs2}. It can be observed that the CRB values for these targets decrease to varying degrees when the UAV is closer to each of them, owing to the improved channel conditions resulting from the reduced distance.

\begin{figure}[t]
    \centering
    \includegraphics[width=0.78\linewidth]{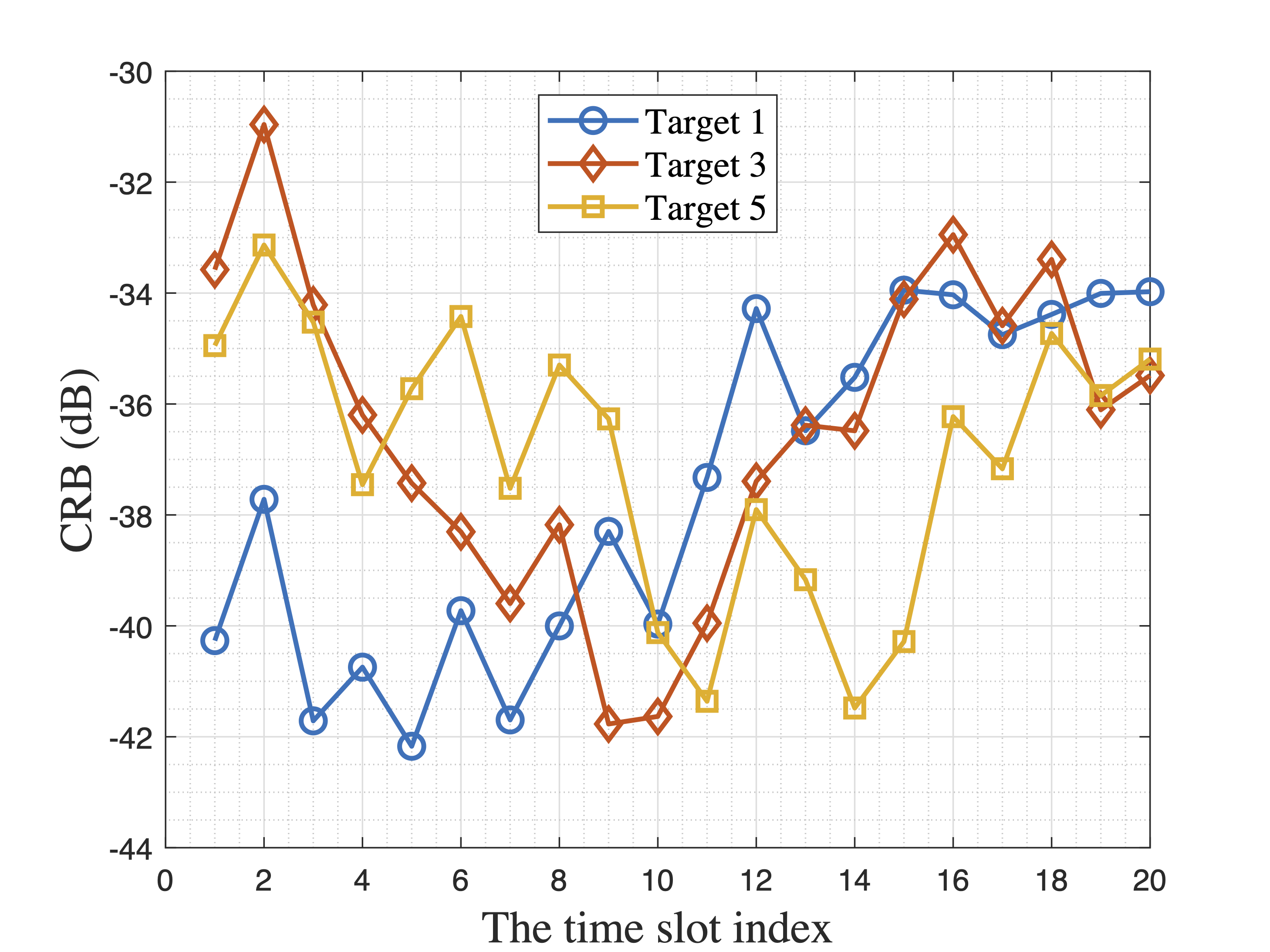}
	\caption{Achievable beampattern gains of selected targets.}
	\label{figs3}
\end{figure}

\begin{figure}[t]
    \centering
    \includegraphics[width=0.78\linewidth]{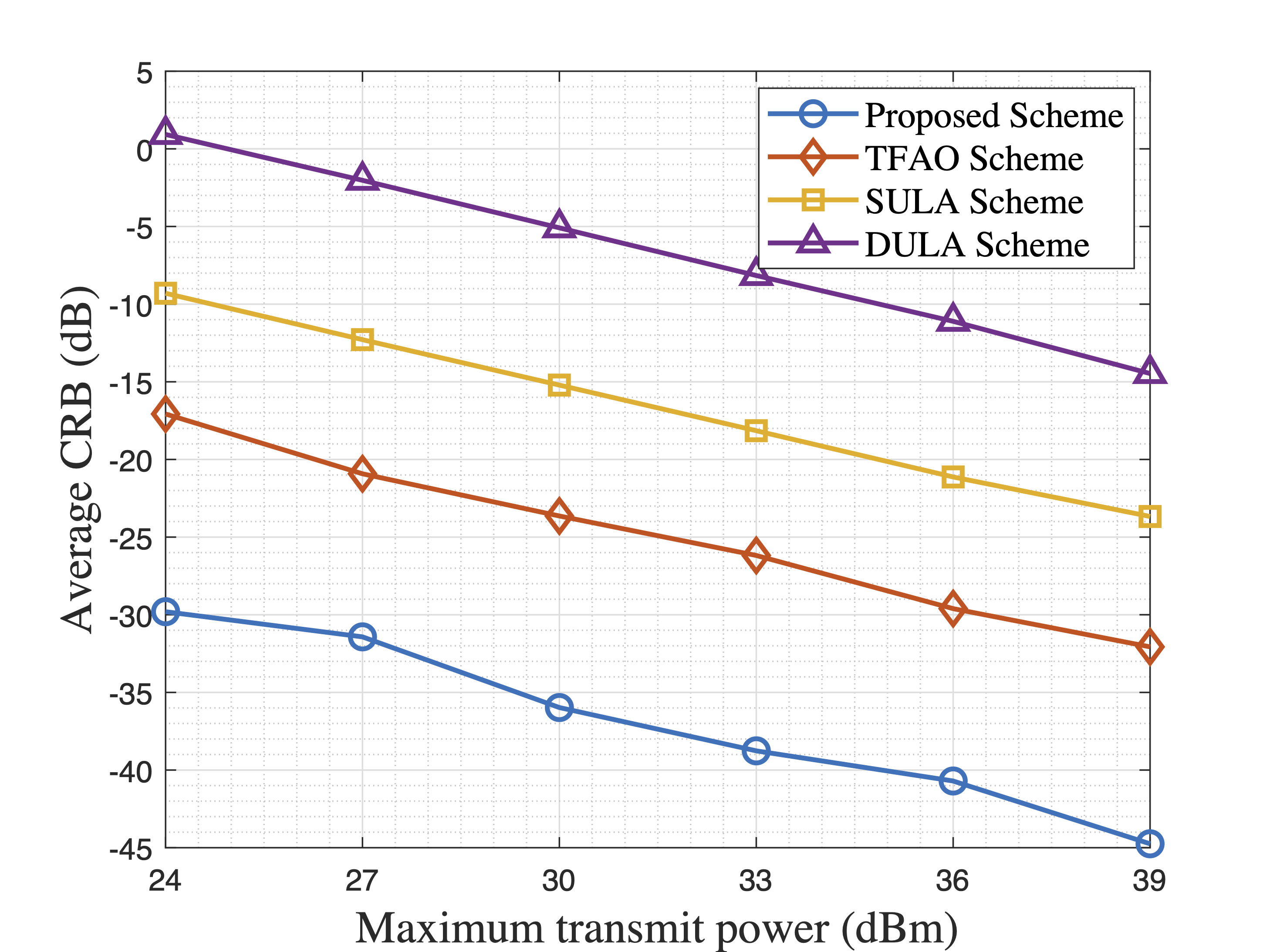}
	\caption{Average CRB value versus maximum transmit power under different schemes.}
	\label{figs4}
\end{figure}

{
Fig. \ref{figs4} demonstrates the average CRB value versus the maximum transmit power under different schemes.
It can be observed that as the maximum transmit power increases, the CRB values of all schemes decrease accordingly. The reason is that, the higher transmit power enables greater beampattern gain, which in turn reduces the CRB value. Among these schemes, the proposed scheme achieves the lowest CRB performance across the entire maximum transmit power range. The improvement can be attributed to the joint optimization design, which leverages a fully utilization of the feasible antenna movement range of the transmit and receive FAs compared to other schemes.
}

\begin{figure}[t]
    \centering
    \includegraphics[width=0.78\linewidth]{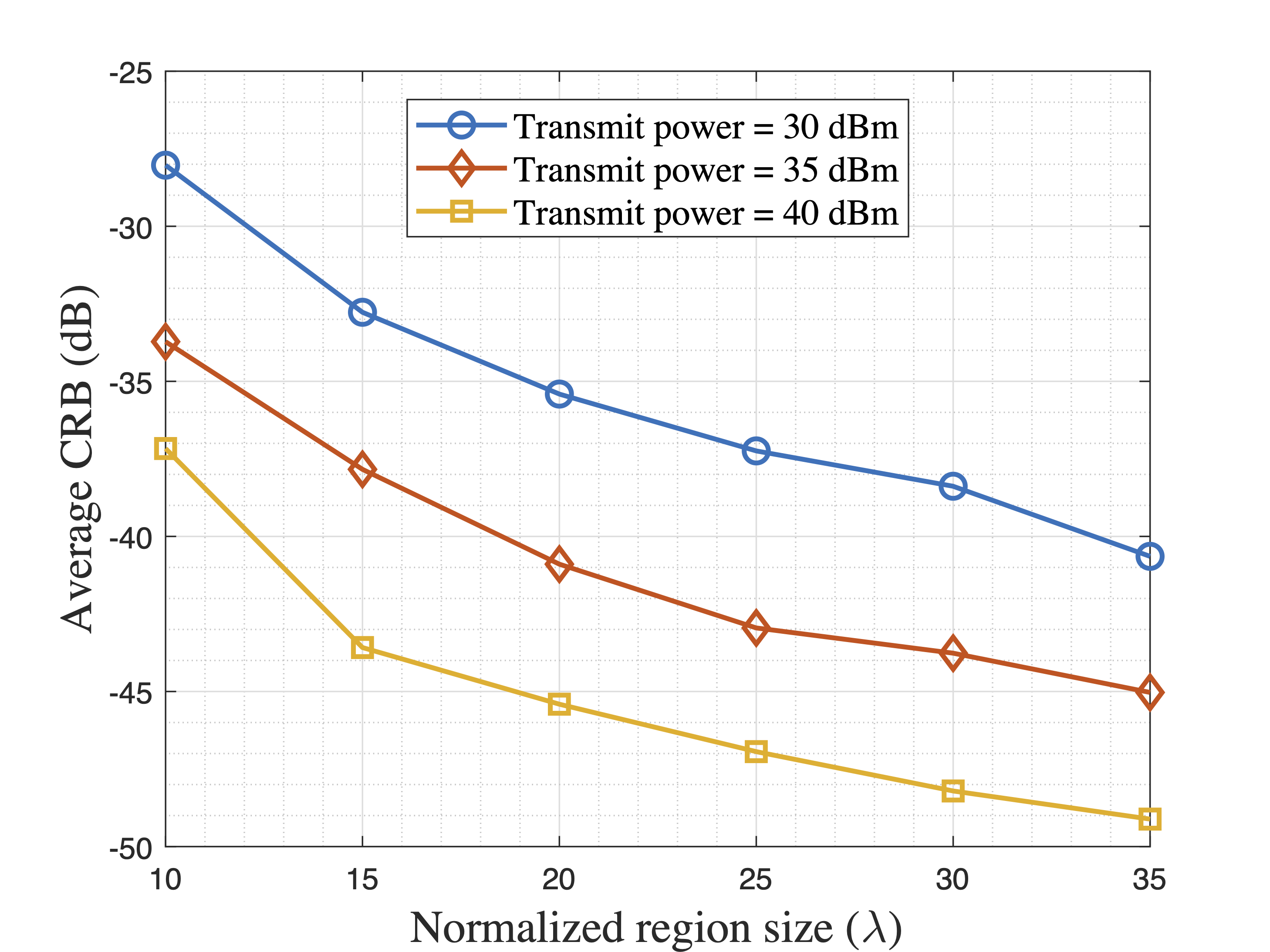}
	\caption{Average CRB value versus normalized region size.}
	\label{figs5}
\end{figure}

{
As illustrated in Fig. \ref{figs5}, the average CRB value versus the normalized region size of the proposed scheme is compared under different transmit powers, where the normalized region size corresponds to the maximum reconfigurable feasible region of the FAs.
As we can observe, with the normalized region size expands, the achievable average CRB values of all targets decrease under different transmit powers of the UAV. The reason is that, a larger antenna reconfigurable range provides more DoFs for the transmit and receive FAs, enabling more precise and flexible beam steering towards the targets, which improves channel quality and thereby enhances the CRB accuracy.
}

\begin{figure}[t]
    \centering
    \includegraphics[width=0.78\linewidth]{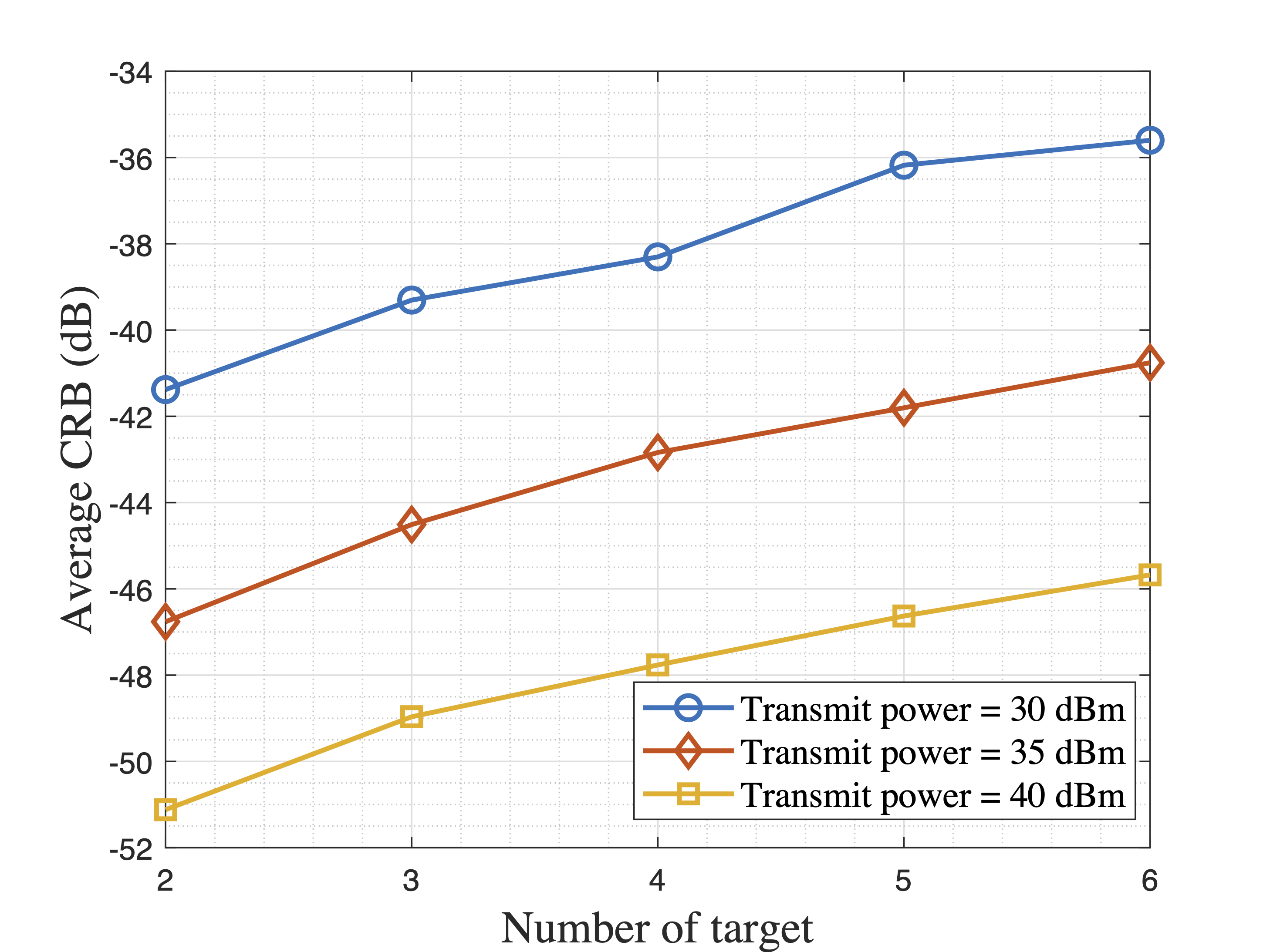}
	\caption{Average CRB value versus the number of targets.}
	\label{figs6}
\end{figure}
{
In Fig. \ref{figs6}, the average CRB value of the targets versus the number of the targets of the proposed scheme is presented.
It can be observed that as the number of the targets increases, the average CRB value also rises. The reason is that, a larger number of targets leads to a more dispersed distribution of the beamforming energy, thereby reducing the beampattern gain per target and resulting in the degraded average CRB performance. On the other hand, when the UAV transmits with a higher transmit power, the CRB performance improves across all considered target counts. The trend is consistent with our previous analysis for Fig. \ref{figs4}.
}


\section{Conclusion}
This paper demonstrates the significant potential of the UAV-enabled FAS in improving multi-target sensing performance through the joint UAV trajectory design, antenna position optimization of the transmit FAs and the receive FAs, and the UAV beamforming optimization. The proposed AO algorithm effectively addresses the non-convex CRB minimization problem and achieves superior estimation accuracy and operational reliability, which are validated in the numerical results.

{
Looking ahead, future work can extend this framework to incorporate integrated sensing and communication techniques, with the goal of enabling simultaneous target sensing and high-speed ubiquitous data transmission within the LAWCNs.
Further investigations can also explore the coordination of the UAV-enabled sensing with other aerial services, such as low-altitude control and ultra-reliable computation services, to achieve efficient resource utilization in next-generation aerial platforms for performing the LAE-oriented consumer missions efficiently and reliably.
}

\begin{appendices}

\section{Transformation of Eq. \eqref{CRB_1}}
\label{crb1to2}
Similar to \cite{liu2022Cramér-Rao}, we first give the derivatives of the steering vectors for detecting the $k$-th target with respect to the vertical AoD from the UAV to the $k$-target at time slot $n$, which are given by
\begin{equation}
\begin{split}
    \tilde{\mathbf{a}}_k[n] =& \left[{\text{j}} \frac{2 \pi}{\lambda}x_1[n]\cos(\theta_k[n]),
    \ldots,\right.\\&\ \ \left.
    {\text{j}} \frac{2 \pi}{\lambda}x_{N_t}[n]\cos(\theta_k[n])\right]^{\mathsf{T}} \odot \mathbf{a}_k[n], 
\end{split}
\end{equation}
\begin{equation}
\begin{split}
    \tilde{\mathbf{b}}_k[n] =& \left[{\text{j}} \frac{2 \pi}{\lambda}y_1[n]\cos(\theta_k[n]),
    \ldots,\right.\\&\ \ \left.
    {\text{j}} \frac{2 \pi}{\lambda}y_{N_r}[n]\cos(\theta_k[n])\right]^{\mathsf{T}} \odot \mathbf{b}_k[n].
\end{split}
\end{equation}
Accordingly, we have
\begin{equation}
\begin{split}
    \mathsf{tr}\left( \mathbf{\Psi}_k^{\mathsf{H}}[n] \mathbf{\Psi}_k[n] \mathbf{R}_{\mathbf{x}}[n] \right) & = \mathsf{tr}\left(\mathbf{a}_k[n] \mathbf{b}_k^{\mathsf{H}}[n] \mathbf{b}_k[n] \mathbf{a}_k^{\mathsf{H}}[n] \mathbf{R}_{\mathbf{x}}[n] \right)   \\
    & = N_{r} \mathbf{a}_k^{\mathsf{H}}[n] \mathbf{R}_{\mathbf{x}}[n] \mathbf{a}_k[n] = N_{r}\mathbf{A}_k[n],
    \label{appcrba1}
\end{split}
\end{equation}
\begin{equation}
\begin{split}
    &\mathsf{tr}\left(\tilde{\mathbf{\Psi}}_k^{\mathsf{H}}[n] \mathbf{\Psi}_k[n] \mathbf{R}_{\mathbf{x}}[n]\right)\\&= 
    \mathsf{tr}\left( \left( \mathbf{a}_k[n] \tilde{\mathbf{b}}_k^{\mathsf{H}}[n] + \tilde{\mathbf{a}}_k[n] \mathbf{b}_k^{\mathsf{H}}[n] \right) \mathbf{b}_k[n] \mathbf{a}_k^{\mathsf{H}}[n] \mathbf{R}_{\mathbf{x}}[n] \right)\\& 
    = \kappa_k[n] \mathbf{A}_k[n] \sum_{i = 1}^{N_{r}} y_{i}[n] + N_{r} \mathbf{a}_k^{\mathsf{H}}[n] \mathbf{R}_{\mathbf{x}}[n] \tilde{\mathbf{a}}_k[n],
    \label{appcrba2}
\end{split}
\end{equation}
\begin{equation}
\begin{split}
    &\mathsf{tr}\left(\tilde{\mathbf{\Psi}}_k^{\mathsf{H}}[n] \tilde{\mathbf{\Psi}}_k[n] \mathbf{R}_{\mathbf{x}}[n]\right)  \\ & = 
     \mathsf{tr}\left( \left( \mathbf{a}_k[n] \tilde{\mathbf{b}}_k^{\mathsf{H}}[n] + \tilde{\mathbf{a}}_k[n] \mathbf{b}_k^{\mathsf{H}}[n] \right) \cdot \right. \\& \quad \  \left.
     \left( \tilde{\mathbf{b}}_k[n] \mathbf{a}_k^{\mathsf{H}}[n] + \mathbf{b}_k[n] \tilde{\mathbf{a}}_k^{\mathsf{H}}[n] \right) \mathbf{R}_{\mathbf{x}}[n]\right)\\ \ \ 
     & =  -\left(\kappa_k[n] \right)^{2} \mathbf{A}_k[n] \sum_{i = 1}^{N_{r}} y_{i}^{2}[n] + N_{r} \tilde{\mathbf{a}}_k^{\mathsf{H}}[n] \mathbf{R}_{\mathbf{x}}[n] \tilde{\mathbf{a}}_k[n] \\
     & \ \ \ \ \, - \kappa_k[n] \tilde{\mathbf{A}}_k[n] \sum_{i = 1}^{N_{r}} y_{i}[n],
     \label{appcrba3}
\end{split}
\end{equation}
\begin{equation}
\begin{split}
    &\left| \mathsf{tr}\left( \tilde{\mathbf{\Psi}}_k^{\mathsf{H}}[n] \mathbf{\Psi}_k[n] \mathbf{R}_{\mathbf{x}}[n] \right) \right|^2 \\&= \left( \kappa_k[n] \mathbf{A}_k[n] \sum_{i = 1}^{N_{r}} y_{i}[n]  +  N_{r} \mathbf{a}_k^{\mathsf{H}}[n] \mathbf{R}_{\mathbf{x}}[n] \tilde{\mathbf{a}}_k[n] \right)\\
    &\quad\ \times \left( -\kappa_k[n] \mathbf{A}_k[n] \sum_{i = 1}^{N_{r}} y_{i}[n] + N_{r} \tilde{\mathbf{a}}_k^{\mathsf{H}}[n] \mathbf{R}_{\mathbf{x}}[n] \mathbf{a}_k[n] \right)\\
    &= - (\kappa_k[n]\mathbf{A}_k[n])^2 \left( \sum_{i = 1}^{N_{r}} y_{i}[n] \right)^{2} \\
    &\quad\ + N_{r}^{2} \mathbf{a}_k^{\mathsf{H}}[n] \mathbf{R}_{\mathbf{x}}[n] \tilde{\mathbf{a}}_k[n] \tilde{\mathbf{a}}_k^{\mathsf{H}}[n] \mathbf{R}_{\mathbf{x}}[n] \mathbf{a}_k[n]\\
    &\quad\ - \kappa_k[n] N_{r} \mathbf{A}_k[n]
\tilde{\mathbf{A}}_k[n] \sum_{i = 1}^{N_{r}} y_{i}[n], 
\label{appcrba4}
\end{split}
\end{equation}
where
\begin{equation}
    \mathbf{A}_k[n] = \mathbf{a}_k^{\mathsf{H}}[n] \mathbf{R}_{\mathbf{x}}[n] \mathbf{a}_k[n],
\end{equation}
\begin{equation}
    \kappa_k[n] = - {\text{j}} \frac{2 \pi}{\lambda} \cos (\theta_{k}[n]),
\end{equation}
\begin{equation}
    \tilde{\mathbf{A}}_k[n] = \mathbf{a}_k^{\mathsf{H}}[n] \mathbf{R}_{\mathbf{x}}[n] \tilde{\mathbf{a}}_k[n] - \tilde{\mathbf{a}}_k^{\mathsf{H}}[n] \mathbf{R}_{\mathbf{x}}[n] \mathbf{a}_k[n].
\end{equation}
By substituting Eqs. \eqref{appcrba1}, \eqref{appcrba2}, \eqref{appcrba3}, and \eqref{appcrba4} into Eq. \eqref{CRB_1}, we can thus obtain Eq. \eqref{CRB_2}.

{
\section{Proof of the Proposition \ref{pconvergence}}
\label{propostionconv}
\begin{proof}
We denote ${\mathbf{q}}^{l}$,$ {\mathbf{R}}^{l}$,$ {\mathbf{x}}^{l}$, and $ {\mathbf{y}}^{l}$ as the solution obtained in the $l$-th iteration by Algorithm \ref{algorithm1}, and ${{\mathcal{C}}}\left({\mathbf{q}}^{l}, {\mathbf{R}}^{l}, {\mathbf{x}}^{l}, {\mathbf{y}}^{l}\right)$ as the value of the objective function of problem \textbf{P2}.
In the procedure of Algorithm \ref{algorithm1}, we firstly solve subproblem \textbf{P3}.$l$ to obtain the solution of the UAV trajectory $ {\mathbf{q}}^{l}$ when
given the beamforming $ {\mathbf{R}}^{l}$, and the antenna positions $ {\bf{x}}^{l}$, and $ {\bf{y}}^{l}$. Accordingly, we have
\begin{equation}
     {{\mathcal{C}}}\left({\bf{q}}^{l+1}, {\bf{R}}^{l}, {\bf{x}}^{l}, {\bf{y}}^{l}\right)
\geq {{\mathcal{C}}}\left({\bf{q}}^{l}, {\bf{R}}^{l}, {\bf{x}}^{l}, {\bf{y}}^{l}\right).
\label{pps1}
\end{equation}
Then, given the UAV trajectory $ {\bf{q}}^{l+1}$ and the antenna positions $ {\bf{x}}^{l}$ and $ {\bf{y}}^{l}$, we solve the beamforming optimization subproblems \textbf{P4}.$n$ of all time slots to obtain $\mathbf{R}^{l+1}$.
Then, we have
\begin{equation}
     {{\mathcal{C}}}\left({\bf{q}}^{l+1}, {\bf{R}}^{l+1}, {\bf{x}}^{l}, {\bf{y}}^{l}\right)
\geq {{\mathcal{C}}}\left({\bf{q}}^{l+1}, {\bf{R}}^{l}, {\bf{x}}^{l}, {\bf{y}}^{l}\right).
\label{pps2}
\end{equation}
Similarly, given the UAV trajectory $ {\bf{q}}^{l+1}$ and beamforming $ {\bf{R}}^{l+1}$, we solve the antenna position optimization subproblems \textbf{P5}.$n$ of all time slots by Algorithm \ref{algorithmPSO} to obtain $\mathbf{x}^{l+1}$ and $\mathbf{y}^{l+1}$.
Then, we have
\begin{equation}
     {{\mathcal{C}}}\left({\bf{q}}^{l+1}, {\bf{R}}^{l+1}, {\bf{x}}^{l+1}, {\bf{y}}^{l+1}\right)
\geq {{\mathcal{C}}}\left({\bf{q}}^{l+1}, {\bf{R}}^{l+1}, {\bf{x}}^{l}, {\bf{y}}^{l}\right).
\label{pps3}
\end{equation}
By combining \eqref{pps1}, \eqref{pps2}, and \eqref{pps3}, we can obtain
\begin{equation}
     {{\mathcal{C}}}\left({\bf{q}}^{l+1}, {\bf{R}}^{l+1}, {\bf{x}}^{l+1}, {\bf{y}}^{l+1}\right)
\geq {{\mathcal{C}}}\left({\bf{q}}^{l}, {\bf{R}}^{l}, {\bf{x}}^{l}, {\bf{y}}^{l}\right),
\end{equation}
which indicates that the value of the objective function in problem \textbf{P2} is non-decreasing over iterations.
Meanwhile, the achievable CRB for all targets is bounded by the resource constraints, e.g., the transmit power of the UAV, the trajectory design of the UAV, and the feasible region of the FAs.
Hence, Algorithm \ref{algorithm1} is guaranteed to converge.
\end{proof}
}

\end{appendices}

\bibliographystyle{IEEEtran}
\bibliography{ref}

@INPROCEEDINGS{10693833,
  author={Kuang, Ziming and Liu, Wenchao and Wang, Chunjie and Jin, Zhenzhen and Ren, Jinke and Zhang, Xuhui and Shen, Yanyan},
  booktitle={Proc. IEEE/CIC Int. Conf. Commun. China Workshops (ICCC Workshops)}, 
  title={Movable-Antenna Array Empowered {ISAC} Systems for Low-Altitude Economy}, 
  year={Hangzhou, China, Aug. 2024},
  volume={},
  number={},
  pages={776-781},
}

@INPROCEEDINGS{11148216,
  author={Li, Yining and Wan, Ziwei and Li, Zhuoran and Wang, Yueqing and Sun, Chongjia and Gao, Zhen},
  booktitle={Proc. IEEE/CIC Int. Conf. Commun. China Workshops (ICCC Workshops)}, 
  title={Compressive Sensing Based {UAV} Communication and Sensing {MIMO} System with Movable Antenna}, 
  year={Shanghai, China, Aug. 2025},
  volume={},
  number={},
  pages={1-6},
}

@ARTICLE{10879807,
  author={Cheng, Gaoyuan and Song, Xianxin and Lyu, Zhonghao and Xu, Jie},
  journal={IEEE Trans. Commun.}, 
  title={Networked {ISAC} for Low-Altitude Economy: Coordinated Transmit Beamforming and {UAV} Trajectory Design}, 
  year={Aug. 2025},
  volume={73},
  number={8},
  pages={5832-5847},
}

@article{liu2025movable,
  title={Movable Antennas Meet Low-Altitude Wireless Networks: Fundamentals, Opportunities, and Future Directions},
  author={Liu, Wenchao and Zhang, Xuhui and Wang, Chunjie and Ren, Jinke and Yuan, Weijie},
  journal={arXiv preprint arXiv:2506.13250},
  year={2025}
}

@article{chen2025ISCC,
  title={Latency Minimization for Multi-{AAV}-Enabled {ISCC} Systems with Movable Antenna},
  author={Chen, Yiyang and Liu, Wenchao and Wang, Chunjie and Wu, Yinyu and Zhang, Xuhui and Shen, Yanyan},
  journal={arXiv preprint arXiv:2508.05574},
  year={2025}
}

@ARTICLE{10955337,
  author={Jiang, Yihang and Li, Xiaoyang and Zhu, Guangxu and Li, Hang and Deng, Jing and Han, Kaifeng and Shen, Chao and Shi, Qingjiang and Zhang, Rui},
  journal={IEEE Commun. Mag.}, 
  title={Integrated Sensing and Communication for Low Altitude Economy: Opportunities and Challenges}, 
  year={Dec. 2025},
  volume={63},
  number={12},
  pages={72-78},
}

@ARTICLE{10430083,
  author={Zheng, Yali and Hu, Jie and Zhao, Yizhe and Yang, Kun},
  journal={IEEE Trans. Wireless Commun.}, 
  title={Average Age of Sensing in Wireless Powered Sensor Networks}, 
  year={Aug. 2024},
  volume={23},
  number={8},
  pages={9265-9281}
}

@ARTICLE{10756618,
  author={Zhuang, Wenhao and He, Xinyu and Mao, Yuyi and Liu, Juan},
  journal={IEEE Wireless Commun. Lett.}, 
  title={{UAV}-Enabled Wireless Networks for Integrated Sensing and Learning-Oriented Communication}, 
  year={Feb. 2025},
  volume={14},
  number={2},
  pages={340-344}
}

@ARTICLE{10376413,
  author={Wang, Qian and Yang, Pengfei and Yan, Xiao and Wu, Hsiao-Chun and He, Ling},
  journal={IEEE Sensors J.}, 
  title={Radio Frequency-Based {UAV} Sensing Using Novel Hybrid Lightweight Learning Network}, 
  year={Feb. 2024},
  volume={24},
  number={4},
  pages={4841-4850}
}

@ARTICLE{11078433,
  author={Jiang, Hao and Shi, Wangqi and Chen, Zhen and Zhang, Zaichen and Wong, Kai-Kit and Shin, Hyundung},
  journal={IEEE Wireless Commun. Lett.}, 
  title={Dynamic Channel Modeling of Fluid Antenna Systems in {UAV} Communications}, 
  year={Oct. 2025},
  volume={14},
  number={10},
  pages={3169-3173}
}

@ARTICLE{10146286,
  author={Wong, Kai-Kit and Tong, Kin-Fai and Chae, Chan-Byoung},
  journal={IEEE Commun. Lett.}, 
  title={Fluid Antenna System—Part {II}: Research Opportunities}, 
  year={Aug. 2023},
  volume={27},
  number={8},
  pages={1924-1928}}

@ARTICLE{10146274,
  author={Wong, Kai-Kit and New, Wee Kiat and Hao, Xu and Tong, Kin-Fai and Chae, Chan-Byoung},
  journal={IEEE Commun. Lett.}, 
  title={Fluid Antenna System—Part {I}: Preliminaries}, 
  year={Aug. 2023},
  volume={27},
  number={8},
  pages={1919-1923}}

@INPROCEEDINGS{10571011,
  author={Zhang, Li and Fang, Yuan and Ren, Zixiang and Qiu, Ling and Xu, Jie},
  booktitle={ Proc. IEEE Wireless Commun. Networking Conf. (WCNC)}, 
  title={Training-Free Energy Beamforming Assisted by Wireless Sensing}, 
  year={Dubai, United Arab Emirates, Apr. 2024},
  volume={},
  number={},
  pages={1-6}}

@article{lyu2022joint,
  title={Joint maneuver and beamforming design for {UAV}-enabled integrated sensing and communication},
  author={Lyu, Zhonghao and Zhu, Guangxu and Xu, Jie},
  journal={IEEE Trans. Wireless Commun.},
  volume={22},
  number={4},
  pages={2424--2440},
  year={2022},
  month={Apr.}
}

@ARTICLE{Wang2024Fluid,
  author={Wang, Chao and Li, Guo and Zhang, Haibin and Wong, Kai-Kit and Li, Zan and Ng, Derrick Wing Kwan and Chae, Chan-Byoung},
  journal={IEEE Trans. Wireless Commun.}, 
  title={Fluid Antenna System Liberating Multiuser {MIMO} for {ISAC} via Deep Reinforcement Learning}, 
  year={Sep. 2024},
  volume={23},
  number={9},
  pages={10879-10894}}

@ARTICLE{yuan2021Integrated,
  author={Yuan, Weijie and Wei, Zhiqiang and Li, Shuangyang and Yuan, Jinhong and Ng, Derrick Wing Kwan},
  journal={IEEE J. Sel. Topics Signal Process.}, 
  title={Integrated Sensing and Communication-Assisted Orthogonal Time Frequency Space Transmission for Vehicular Networks}, 
  year={Nov. 2021}, 
  volume={15}, 
  number={6}, 
  pages={1515-1528}}

@ARTICLE{liu2022Cramér-Rao,
  author={Liu, Fan and Liu, Ya-Feng and Li, Ang and Masouros, Christos and Eldar, Yonina C.},
  journal={IEEE Trans. Signal Process.}, 
  title={Cramér-{Rao} Bound Optimization for Joint Radar-Communication Beamforming}, 
  year={Dec. 2022},
  volume={70},
  number={},
  pages={240-253}}

@ARTICLE{10654366,
  author={Liu, Wenchao and Zhang, Xuhui and Xing, Huijun and Ren, Jinke and Shen, Yanyan and Cui, Shuguang},
  journal={IEEE Wireless Commun. Lett.}, 
  title={{UAV}-Enabled Wireless Networks With Movable-Antenna Array: Flexible Beamforming and Trajectory Design}, 
  year={Mar. 2025},
  volume={14},
  number={3},
  pages={566-570},
}

@ARTICLE{10741192,
  author={Xiao, Zhenyu and Pi, Xiangyu and Zhu, Lipeng and Xia, Xiang-Gen and Zhang, Rui},
  journal={IEEE Trans. Wireless Commun.}, 
  title={Multiuser Communications With Movable-Antenna Base Station: Joint Antenna Positioning, Receive Combining, and Power Control}, 
  year={Dec. 2024},
  volume={23},
  number={12},
  pages={19744-19759},
}

@ARTICLE{10818453,
  author={Ding, Jingze and Zhou, Zijian and Jiao, Bingli},
  journal={IEEE Trans. Wireless Commun.}, 
  title={Movable Antenna-Aided Secure Full-Duplex Multi-User Communications}, 
  year={Mar. 2025},
  volume={24},
  number={3},
  pages={2389-2403},
}

@ARTICLE{11048899,
  author={Rostami Ghadi, Farshad and Kaveh, Masoud and Hernando-Gallego, Francisco and Martín, Diego and Wong, Kai-Kit and Chae, Chan-Byoung},
  journal={IEEE Wireless Commun. Lett.}, 
  title={{UAV}-Relay Assisted {RSMA} Fluid Antenna System: Outage Probability Analysis}, 
  year={Sep. 2025},
  volume={14},
  number={9},
  pages={2907-2911},
}

@ARTICLE{9264694,
  author={Wong, Kai-Kit and Shojaeifard, Arman and Tong, Kin-Fai and Zhang, Yangyang},
  journal={IEEE Trans. Wireless Commun.}, 
  title={Fluid Antenna Systems}, 
  year={Mar. 2021},
  volume={20},
  number={3},
  pages={1950-1962},
}

@ARTICLE{10980172,
  author={Zhang, Xuhui and Xing, Huijun and Shen, Yanyan and Xu, Jie and Cui, Shuguang},
  journal={IEEE Trans. Wireless Commun.}, 
  title={Age of Information Minimization in {UAV}-Enabled {IoT} Networks via Federated Reinforcement Learning}, 
  year={Sep. 2025},
  volume={24},
  number={9},
  pages={7923-7939},
}

@ARTICLE{10680056,
  author={Mao, Weihao and Lu, Yang and Pan, Gaofeng and Ai, Bo},
  journal={IEEE J. Sel. Areas Commun.}, 
  title={{UAV}-Assisted Communications in {SAGIN}-{ISAC}: Mobile User Tracking and Robust Beamforming}, 
  year={Jan. 2025},
  volume={43},
  number={1},
  pages={186-200},
}

@ARTICLE{10632079,
  author={Xiao, Xiongbing and Wang, Xiumin and Lin, Weiwei},
  journal={IEEE Trans. Consum. Electron.}, 
  title={Joint {AoI}-Aware {UAVs} Trajectory Planning and Data Collection in {UAV}-Based {IoT} Systems: A Deep Reinforcement Learning Approach}, 
  year={Nov. 2024},
  volume={70},
  number={4},
  pages={6484-6495},
}

@article{wu2025low,
  title={Low-Altitude Wireless Networks: A Survey},
  author={Wu, Jun and Yang, Yaoqi and Yuan, Weijie and Liu, Wenchao and Wang, Jiacheng and Mao, Tianqi and Zhou, Lin and Cui, Yuanhao and Liu, Fan and Sun, Geng and others},
  journal={arXiv preprint arXiv:2509.11607},
  year={2025}
}

@ARTICLE{11124264,
  author={Chen, Zijing and Xu, Jianlei and Guo, Yijun and Hao, Jianjun},
  journal={IEEE Trans. Consum. Electron.}, 
  title={Joint Optimization of Multi-{UAV} Assisted Computation Offloading and Topological Task Routing for Consumer {IoT} Emerging Businesses}, 
  year={Nov. 2025},
  volume={71},
  number={4},
  pages={12065-12074},
}

@ARTICLE{11192518,
  author={Xue, Longxiang and Zhong, Luying and Zhang, Junjie and Chen, Zheyi and Cheng, Hongju and Li, Jie},
  journal={IEEE Trans. Consum. Electron.}, 
  title={Decentralized Federated Learning for Adversarial Anomaly Detection in Consumer-grade {UAV}-assisted {MEC} Systems}, 
  year={Nov. 2025},
  volume={71},
  number={4},
  pages={10797-10811},
}

@ARTICLE{11222717,
  author={Wang, Xi and Liu, Xiaotian and Zeng, Hongwei and Xiang, Zhengzhe and Gao, Honghao},
  journal={IEEE Trans. Consum. Electron.}, 
  title={{UAV}-{D2D} Assisted Multi-Agent Parameter-Sharing for Optimizing Energy Consumption and Task Latency in {MEC} Task Offloading}, 
  year={Nov. 2025},
  volume={71},
  number={4},
  pages={10853-10867},
}

@ARTICLE{11222750,
  author={Song, Liubin and Guo, Daoxing and Hu, Ting and Yang, Jingzhou},
  journal={IEEE Trans. Consum. Electron.}, 
  title={Deep Reinforcement Learning-Based Anti-Jamming Rate Adaptation and Frequency Selection Algorithm for {UAV} Ad Hoc Networks}, 
  year={Nov. 2025},
  volume={71},
  number={4},
  pages={11782-11790},
}

@ARTICLE{11205471,
  author={Yuan, Quan and Li, Ruitai and Zhou, Benkuan and Lu, Jing and Hu, Menglan and Lai, Pan and Zhao, Yingchao and Zhang, Xiao},
  journal={IEEE Trans. Consum. Electron.}, 
  title={Collaborative Truck-{UAV} Delivery Routing Optimization under Dynamic Weather Conditions and Customer Demands}, 
  year={Nov. 2025},
  volume={71},
  number={4},
  pages={10950-10964},
}

@ARTICLE{zhang2025dc,
  author={Zhang, Xuhui and Liu, Wenchao and Ren, Jinke and Wang, Chunjie and Xing, Huijun and Shen, Yanyan and Cui, Shuguang},
  journal={IEEE Trans. Netw. Sci. Eng.}, 
  title={Movable-Antenna Empowered {AAV}-Enabled Data Collection Over Low-Altitude Wireless Networks}, 
  year={2026},
  volume={13},
  number={},
  pages={4506-4523},
}
\end{document}